\newcommand{\tr}{\mathrm{Tr}}
\theoremstyle{plain}
\newtheorem{thm}{\protect\theoremname}
\theoremstyle{plain}
\theoremstyle{definition}
\theoremstyle{plain}
\newtheorem{lemma}{\protect\lemmaname}
\providecommand{\examplename}{Example}
\providecommand{\propositionname}{Proposition}
\providecommand{\theoremname}{Theorem}
\providecommand{\lemmaname}{Lemma}
\begin{document}

\title{Witnessing quantum coherence with prior knowledge of observables}

	\author{Mao-Sheng Li}
	\affiliation{ School of Mathematics,
		South China University of Technology, Guangzhou
		510641,  China}
	
	\author{Wen Xu}
	\affiliation{ School of Mathematics,
		South China University of Technology, Guangzhou
		510641,  China}	
	
\author{Shao-Ming Fei}
\email{feishm@cnu.edu.cn}
\affiliation{
School of Mathematical Sciences, Capital Normal University, Beijing 100048, China}
\affiliation{Max-Planck-Institute for Mathematics in the Sciences, 04103 Leipzig, Germany
}

\author{Zhu-Jun Zheng}\email{zhengzj@scut.edu.cn}
	\affiliation{ School of Mathematics,
		South China University of Technology, Guangzhou
		510641,  China}

	\author{Yan-Ling Wang}
\email{ylwang@dgut.edu.cn}
\affiliation{ School of Computer Science and Technology, Dongguan University of Technology, Dongguan, 523808, China}

\date{\today}

\begin{abstract}
Quantum coherence is the key resource in quantum technologies including faster computing, secure communication and advanced sensing. Its quantification and detection are, therefore, paramount within the context of quantum information processing. Having certain priori knowledge on the observables may enhance the efficiency of coherence detection. In this work, we posit that the trace of the observables is a known quantity. Our investigation confirms that this assumption indeed extends the scope of coherence detection capabilities. Utilizing this prior knowledge of the trace of the observables, we establish a series of coherence detection criteria. We investigate the detection capabilities of these coherence criteria from diverse perspectives and ultimately ascertain the existence of four distinct and inequivalent criteria.  These findings contribute to the deepening of our understanding of coherence detection methodologies, thereby potentially opening new avenues for advancements in quantum technologies.
\end{abstract}

\maketitle

\section{Introduction}

Quantum coherence \cite{Streltsov17,Hu2018} is a fundamental phenomenon in quantum mechanics, enabling quantum systems to exist in superpositions of multiple states simultaneously. It plays a pivotal role in various quantum technologies, from quantum computing \cite{Hillery16}, quantum metrology  \cite{Giovannetti04,Giovannetti11},  to nanoscale thermodynamics \cite{Gour15,Lostaglio15,Narasimhachar15,Francica19} and energy transport in biological systems \cite{Lloyd11,Huelga13,Romero14}. As such, coherence represents a critical resource for various quantum information processing tasks.

Significant strides have been made toward the measurement of quantum coherence based on numerous innovative approaches \cite{Baumgratz14,Aberg14,Yuan2015,Streltsov2015,Du2015,Winter2016,Napoli16,Piani16,Qi2017,Bu2017,Luo2017,Jin2018,Fang2018,Ringbauer18,Xi2018,Ma19,Xi2019,Bischof2019,Yu2020,Xu2020,Cui2020,Li2021,Bischof2021,Sun2021,Kim2021,Ray2022,Sun2022,Basiri2022,Sun-Yu2022,Yu-Yu2022,Liu2023}. Quantum resource theory \cite{Aberg14,Winter2016,Napoli16,Bu2017,Bischof2019} has garnered significant attention, proving instrumental in advancing our understanding of coherence. Under the quantum resource theory framework, various coherence monotones and measures have been introduced, including the relative entropy of coherence, the $\ell_1$ norm of coherence, the coherence of formation, the geometric measure of coherence and the robustness of coherence.

Coherence witnesses, akin to their entanglement witness \cite{Guhne09,Horodecki09,Wu07,Hou10a,Wang18} counterparts, have emerged as potent tools for coherence detection in experimental settings \cite{Ren2017,Wang17,Zheng18,Nie19} and coherence quantification in theoretical contexts. In contrast to conventional methods that rely on state tomography, coherence witnesses can directly identify the coherent states \cite{Ma2021,Wang21,Wang21b}. It is worth noting that Napoli et al. \cite{Napoli16} have made significant contributions to estimating the lower bound on the robustness of coherence by utilizing the coherence witnesses, emphasizing the importance of this concept in deepening our understanding of quantum coherence. More recently, Wang et al. \cite{Wang21,Wang21b} conducted a relatively comprehensive exploration of coherence witnesses. Nevertheless, research on coherence witnesses remains relatively limited. In practical applications, we often possess valuable prior knowledge about the observables involved. This naturally raises an important question: can such prior knowledge enhance the accuracy and effectiveness of coherence detection?

In this paper, we systematically investigate coherence witnesses within the framework of prior knowledge about observables. In Section \ref{sec:concept}, we provide an overview of coherence witnesses and unveiling a series of coherence criteria under varying degrees of prior knowledge concerning the traces of observables. In Section \ref{sec:property}, we analyze these coherence criteria from various perspectives, including completeness, finite completeness, finite intersection and inclusion. Finally, in Section \ref{sec:con} we summarize our findings and outline potential avenues for future researches.

\section{Coherence witnesses with prior knowledge of observables }\label{sec:concept}

We denote $\mathcal{H}$ a $d$-dimensional Hilbert space with computational basis $\mathcal{B}:=\{|i\rangle,\,i=1,2,\cdots,d\}$, $\mathbb{H}$ the set of all $d\times d$ Hermitian matrices, and $\mathbb{D}$ the set of all density matrices (self-adjoint positive semidefinite matrices with trace 1). Let $\Delta$ be the set of incoherent states with respect to the basis $\mathcal{B}$, namely, $\Delta$ comprises all density matrices of the form,
	\begin{equation}\label{ic}
		\delta=\sum_{i=1}^d\delta_i|i\rangle\langle i|.
	\end{equation}	
Hence, all the states within $\mathbb{D}\setminus \Delta$ are referred to as coherent states.

By definition the set $\Delta$ of incoherent states is convex and compact. From the Hahn-Banach theorem \cite{Edwards65}, there must exist a hyperplane which separates an arbitrary given coherent state from the set of incoherent states. A coherence witness is an Hermitian operator $W\in \mathbb{H}$ such that (i) $\tr(W\delta)\geq0$ for all incoherent states $\delta\in \Delta$, and (ii) there exists a coherent state $\rho$ such that $\tr(W\rho)<0$. The first condition implies that all the diagonals of $W$ are nonnegative, while the second implies that $W$ has some negative eigenvalue. We denote $\mathbb{H}_\geq$ the set of $d\times d$ Hermitian matrices with nonnegative diagonal elements and $\Lambda_-$ the set with some negative eigenvalue. The general coherence witnesses are then generally given by $\mathbb{H}_{\geq} \cap \Lambda_-.$

In addition to the above coherent witnesses, there are other kinds of coherent witnesses defined by strengthening the condition (i) to $\tr(W\delta)=0$ for all $\delta\in \Delta$, but relaxing the condition (ii) to $\tr(W\rho)\neq0$ for some state $\rho$ \cite{Napoli16,Ren2017,Wang21}. Here, the strengthen condition implies that all the     diagonals of $W$ must be zero, while the relaxed condition allows for detecting more coherent states by this witness. In other words, prior knowledge about the observables may enable one to detect better the coherence.

For a usual coherence witness $W\in \mathbb{H}_{\geq} \cap \Lambda_-$, the condition  $\tr(W\delta)=0$ for all incoherent states $\delta\in \Delta$ is equivalent to $\tr[W]=0.$
This motivates us to define the set of all coherence witnesses with the same trace. Therefore, for each real number $x\geq 0,$ we introduce the set:
$$\mathbb{W}_x:=\{W\in \mathbb{H}_\geq \ \big |\  \tr[W]=x\}.$$

The prior knowledge on the trace of witness can enhance the ability of detecting coherence.
For fixed $x>0$ we have for any $W=(w_{ij})\in \mathbb{W}_x$ and $\delta=\sum_{i=1}^d\delta_i|i\rangle\langle i|\in\Delta$,
$$
0\leq  \tr[W \delta]=\sum_{i=1}^d w_{ii} \delta_i \leq \sum_{i=1}^d w_{ii}=x.
$$
Therefore, if $W=(w_{ij})$ is priorly  known   in the set $\mathbb{W}_x$, then either $\tr[W\rho]<0$ or $\tr[W\rho]>x$ implies the coherence of $\rho$.  Moreover, there do exist some $W=(w_{ij})\in \mathbb{W}_x$ and density matrix $\rho$ such that  $\tr[W\rho]>x.$  In fact, for any $0<r<x$, let
$$
W=(x-r)|1\rangle \langle 1|+\sqrt{rx}\big( |1\rangle \langle 2|+|2\rangle \langle 1| \big) + r|2\rangle \langle 2|
$$
and $W=\lambda_+|v_+\rangle\langle v_+|+\lambda_- |v_-\rangle\langle v_-|$ be its spectral decomposition. As $\lambda_+\lambda_-=(x-r)r-(\sqrt{rx})^2=-r^2<0$, we can always assume $\lambda_+>0$ and $\lambda_-<0$. For $\rho=|v_+\rangle\langle v_+|$ we have
$$\tr[W\rho]=\lambda_+ =(\lambda_++\lambda_-)-\lambda_-=\tr[W]-\lambda_-=x-\lambda_- >x.$$
As a consequence, the prior knowledge on the trace of the observables indeed extends the scope of coherent state detections (see Fig. \ref{fig:compare}).

	\begin{figure}[ht]
	\centering
	\includegraphics[scale=0.34]{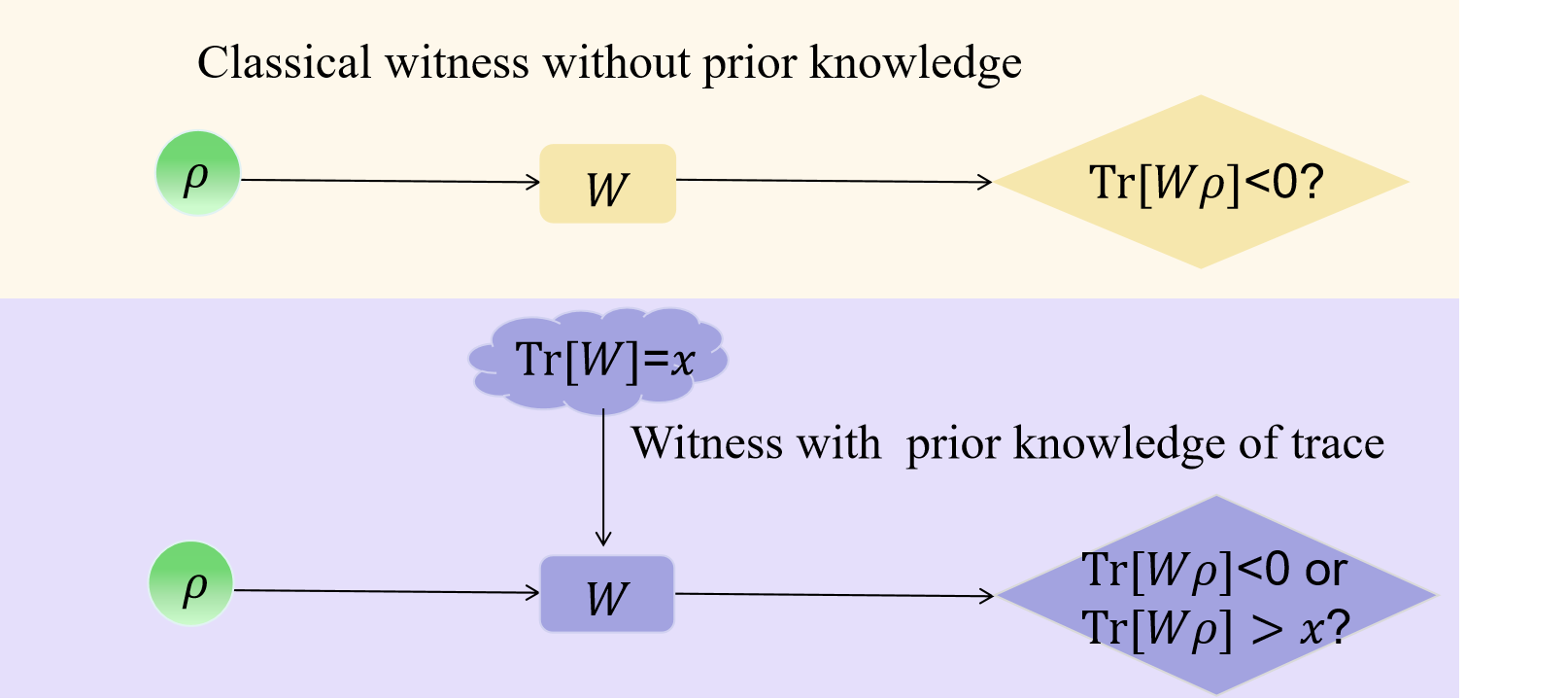}	
	\caption{ Classical coherence witness vs coherence witness with prior knowledge of trace. }\label{fig:compare}
\end{figure}

However, if we just know that   the observables could not be traceless, i.e., such observables are choosing from the set   $\mathbb{W}_>:=\{W\in \mathbb{H}_\geq\ \big |\ \tr[W]>0\}$, one finds that such little  prior knowledge of the trace of  observables could not help to enhance the coherence detection.  That is, 
to ensure that the witness $W$ detects the coherence of $\rho$, we still need to observe the violation of $\tr[W\rho]\geq 0$ as usual.  For proving this, we only need to show that 
 $$ \{\tr[W \delta] \ \big| \ W\in \mathbb{W}_>, \delta\in\Delta\}=\{x\in \mathbb{R} \mid x\geq 0\},$$
 that is, for any real number $x\geq 0$, there always exist $W\in \mathbb{W}_>$ and $\delta \in \Delta$ such that
$\tr[W\delta]=x$. If $x>0$, choosing any $W\in \mathbb{W}_>$ with $\tr[W]=dx$ and $\delta=\frac{\mathbb{I}_d}{d} $, we have
$$\tr[W\delta]=\frac{1}{d} \tr[W]=x.$$
If $x=0$, setting $W=|1\rangle\langle 1|+|1\rangle\langle 2| +|2\rangle\langle 1|$ and $\rho=|2\rangle\langle 2|$, we have  $\tr[W\rho]=0.$

Throughout this paper, we set $\mathcal{X}:=\{r\in\mathbb{R}|r\geq 0\} \cup \{>,\geq\}.$ For each $x\in \mathcal{X},$ we have set $I_x:=\{\tr[W \delta] \ \big| \ W\in \mathbb{W}_x, \delta\in\Delta\}$ and $D_x:=\{\tr[W \rho] \ \big| \ W\in \mathbb{W}_x, \rho\in\mathbb{D}\}.$  For each $x\in \mathcal{X}$, a corresponding coherence witness criterion can be formulated as follows.

\vskip 6pt

\noindent  {\bf Coherence Criterion: $(\mathbb{W}_x,\ D_x,\ I_x)$.} For any $\rho\in \mathbb{D},$ if  there exists some $W\in \mathbb{W}_x,$ such that $\tr[W\rho]\in D_x\setminus I_x,$  then $\rho$ is a coherent state.

For any $x\in \mathcal{X},$ it is directly to verify that $D_x=\mathbb{R}$. From the above discussion, the set $I_x$ is classified into following three classes, see Fig. \ref{fig:structureT}:
\begin{enumerate}
	\item [{\rm(a)}] $I_x=\{r\in\mathbb{R}\ \big| 0\leq   r\leq x\}$ when $x$ is a positive real number;
	\item [{\rm(b)}] $I_x=\{r\in\mathbb{R}\ \big| \ r\geq 0\}$ when $x$ is either $\geq$  or $>$;
	\item [{\rm(c)}] $I_0=\{0\}$ when $x=0$.
\end{enumerate}
	\begin{figure}[ht]
			\centering
			\includegraphics[scale=0.7]{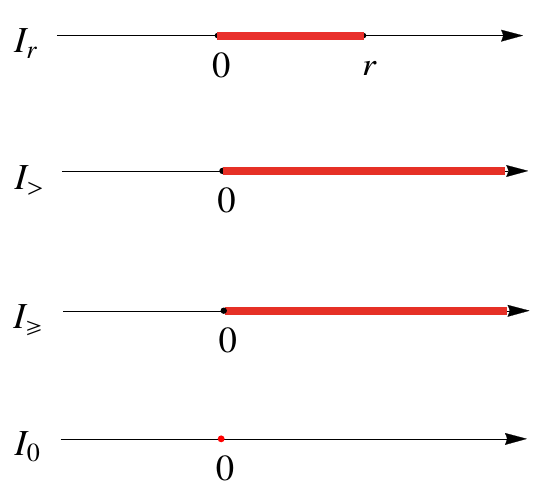}	
			\caption{ An intuitive  view of the range of $I_x$ for $x\in \{r\in\mathbb{R}|r\geq 0\} \cup \{>,\geq\}$. }\label{fig:structureT}
		\end{figure}

To classify coherence criteria we define $\mathbb{C}_x[W]$ to be the set of all coherent states that can be witnessed by $W$ in the setting $(\mathbb{W}_x,\ D_x,\ I_x)$, that is, $\mathbb{C}_x[W]=\{ \rho\in \mathbb{D} \ | \ \tr[W\rho]\in D_x\setminus I_x\}.$  Moreover, we denote $\mathbb{W}_x^c:=\{W\in \mathbb{W}_x\ \big | \ \mathbb{C}_x[W]\neq \emptyset\}$, i.e., the set of all nontrivial coherence witnesses of coherence criterion $(\mathbb{W}_x, D_x,I_x)$. One finds that  $\mathbb{W}_x^c= \mathbb{W}_x \cap \Lambda_-$ (see the proof in appendix  \ref{appen:lemma}).

For $W\in \mathbb{H}_\geq,$  if we set $x=\tr[W]$, then  $W$ belongs to $\mathbb{W}_x$  and   $\mathbb{W}_\geq.$ One finds that
$$\mathbb{C}_x[W]= \mathbb{C}_\geq[W] \cup \mathbb{C}_\geq [x\mathbb{I}_d-W].$$ It is easy to check that  $\mathbb{C}_\geq[W]$ is a convex set, i.e., if   $\tr[W\rho]<0$ and  $\tr[W\sigma]<0$, then $ \tr[W \left( t \rho +(1-t)\sigma \right)]<0$ for $t\in [0,1].$  However,    if $\mathbb{C}_\geq[W] \neq \emptyset $ and   $\mathbb{C}_\geq [x\mathbb{I}_d-W] \neq \emptyset $, the set $\mathbb{C}_x[W]$ is not a convex set. In this case, $\mathbb{C}_x[W]$ is a disjoint union of two convex sets, see cases (A) and (C) of  Fig. \ref{fig:convex}. In fact, for any $\rho \in \mathbb{C}_\geq[W] $ and $\sigma \in  \mathbb{C}_\geq [x\mathbb{I}_d-W] $, we must have $\tr[W\rho]<0$ and $\tr[W\sigma]>x\geq 0$. Therefore, there are must some $t_*\in(0,1)$ such that $ \tr[W \left( t_* \rho +(1-t_*)\sigma \right)]=0$, which indicates that $ t_* \rho +(1-t_*)\sigma  \notin \mathbb{C}_x[W].$
\begin{figure}[ht]
	\centering
	\includegraphics[scale=0.31]{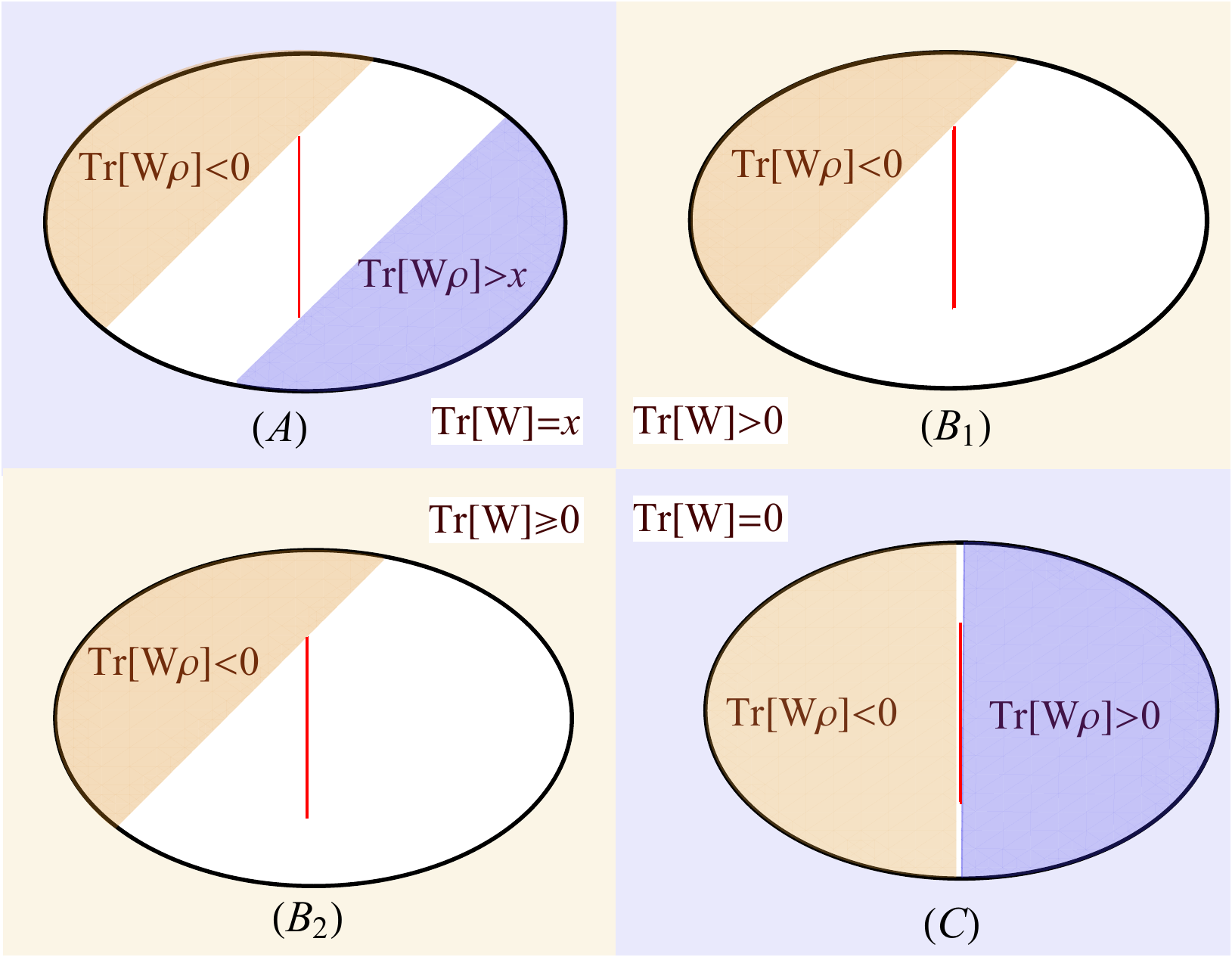}	
	\caption{ An intuitive view of the detection range of each type of coherence witnesses. Here the red line represents the set of incoherent states. The deep orange area represents the coherent states that could be detected by classical witness, i.e.,  $\mathbb{C}_\geq [W]$. The deep purple area represents the coherent states that could be detected by  the condition $\tr[W\rho]>x,$ namely, $\mathbb{C}_\geq [x\mathbb{I}_d-W]$. }\label{fig:convex}
\end{figure}

We say that two coherence criteria $(\mathbb{W}_x, D_x,I_x)$ and $(\mathbb{W}_y, D_y,I_y)$ are equivalent if there is a bijection $F: \mathbb{W}_x^c\rightarrow \mathbb{W}_y^c$ such that $\mathbb{C}_x[W]= \mathbb{C}_y[F(W)]$ for any $W\in \mathbb{W}_x^c.$ It is an intuition that the coherence criteria arising from knowing different values of traces may be almost the same in detecting coherence.
  For any positive real number $x,y$, the two criterions $(\mathbb{W}_x, D_x,I_x)$ and $(\mathbb{W}_y, D_y,I_y)$ are   equivalent in the above sense.
In fact, we  can define a map $F$ from $\mathbb{W}_x^c$ to $\mathbb{W}_y^c$  by sending $W\in \mathbb{W}_x^c$ to  $\frac{y}{x} W.$ Clearly, $\frac{y}{x} W\in \mathbb{W}_y^c$ as $W$
has nonnegative diagonals and $\tr[\frac{y}{x} W]=\frac{y}{x}\tr[W]=\frac{y}{x}x=y.$ It is easy to verify that it is a bijection. For any $\rho\in \mathbb{D},$ $0\leq \tr[W\rho]\leq x$ is equivalent to  $0\leq \tr[\frac{y}{x}W\rho]\leq y.$ Therefore, we always have $\mathbb{C}_x[W]= \mathbb{C}_y[F(W)].$

However, to fully classify the previous mentioned coherence criteria, we need to study more properties on these criteria. After that, we will go back to  tackle with this   problem.

\section{Properties and classification of coherence criteria}\label{sec:property}

For each $x\in \mathcal{X}$,  we have established  the coherence criterion  $(\mathbb{W}_x,\ D_x,\ I_x)$. It is naturally to ask whether these criteria are complete in the sense that they detect all the coherent states, namely, the following relation holds,
$$
\mathbb{D}\setminus \Delta= \bigcup_{W\in \mathbb{W}_x} \mathbb{C}_x[W].
$$
Concerning the completeness of our coherence criteria, we have the following conclusion whose proof can be seen in appendix \ref{appen:thm1}.

\begin{thm}[Completeness]\label{thm:On_Completeness}
For each  $x\in \mathcal{X},$ the coherence criterion ($\mathbb{W}_x, D_x,I_x$) is complete.
\end{thm}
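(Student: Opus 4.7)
The plan is to verify completeness case by case on $x \in \mathcal{X}$, using a single base witness produced by convex separation and then reshaping it to satisfy each trace constraint. Fix a coherent $\rho \in \mathbb{D} \setminus \Delta$. Since $\Delta$ is convex and compact, the Hahn--Banach argument already invoked in the introduction supplies a Hermitian $W_0$ with $\tr[W_0 \delta] \geq 0$ for every $\delta \in \Delta$ and $\tr[W_0\rho] < 0$; testing against $\delta = \proj{i}$ forces the diagonal entries of $W_0$ to be nonnegative, so $W_0 \in \mathbb{H}_\geq = \mathbb{W}_\geq$. Because $\tr[W_0\rho] < 0 \notin I_\geq = [0,\infty)$, this immediately settles the criterion for $x = \geq$.

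For $x = >$, if $\tr[W_0] > 0$ take $W := W_0$; otherwise take $W := W_0 + \epsilon \mathbb{I}_d$ with $\epsilon > 0$ small enough that $\tr[W\rho] = \tr[W_0\rho] + \epsilon$ remains negative. Either way $W \in \mathbb{W}_>$ and witnesses $\rho$. For a positive real $x$, set $t := \tr[W_0] \geq 0$. If $t > 0$, rescale to $W := (x/t)W_0 \in \mathbb{W}_x$, which still satisfies $\tr[W\rho] < 0 \notin I_x$. If $t = 0$ (so $W_0$ already has zero diagonal), take $W := \alpha W_0 + (x/d)\mathbb{I}_d$; its diagonal is nonnegative, its trace equals $x$, and $\tr[W\rho] = \alpha \tr[W_0\rho] + x/d$ becomes negative once $\alpha$ is chosen large enough.

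The remaining case $x = 0$ cannot be handled by such shifts, because adding any multiple of the identity destroys tracelessness; this is the main conceptual obstacle and must be tackled directly from the off-diagonal structure of $\rho$. Since $\rho$ is coherent, some entry $\rho_{ij} = a + ib$ with $i \neq j$ is nonzero, i.e.\ $(a,b) \neq (0,0)$. The two Hermitian operators $W_1 := |i\rangle\langle j| + |j\rangle\langle i|$ and $W_2 := i\bigl(|i\rangle\langle j| - |j\rangle\langle i|\bigr)$ both belong to $\mathbb{W}_0$ (zero diagonal, hence trace zero and trivially nonnegative diagonal), and a direct computation gives $\tr[W_1\rho] = 2a$ and $\tr[W_2\rho] = 2b$. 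At least one of these is nonzero, and since $D_0 \setminus I_0 = \mathbb{R} \setminus \{0\}$, the corresponding $W$ witnesses $\rho$ under the $x = 0$ criterion, completing all four cases.
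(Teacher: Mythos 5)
Your proof is correct, and its overall architecture matches the paper's: reduce every case except $x=0$ to finding some $W\in\mathbb{H}_\geq$ with $\tr[W\rho]<0$, then repair the trace by adding a multiple of $\mathbb{I}_d$ and/or rescaling, and handle $x=0$ with the zero-diagonal real/imaginary-part operators. All the individual steps check out, including the $t=0$ subcase where you take $\alpha W_0+(x/d)\mathbb{I}_d$ with $\alpha$ large.

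The one genuine difference is where the base witness comes from. You invoke Hahn--Banach separation to produce $W_0$ for the cases $\geq$, $>$, and $x>0$, and only use the explicit operators $|i\rangle\langle j|+|j\rangle\langle i|$ and $i(|i\rangle\langle j|-|j\rangle\langle i|)$ for $x=0$. The paper instead starts from these explicit operators in every case: it proves $x=0$ first, obtains the $\geq$ case by taking $\pm W^{R}_{m,n}$ or $\pm W^{I}_{m,n}$ (one of whose expectations is negative since they are real), and then shifts and rescales exactly as you do. The payoff of the paper's fully constructive route is that it exhibits the \emph{finite} witness families $\mathcal{W}_0$ and $\mathcal{W}_\geq$ of Eqs.~\eqref{eq:finite_0} and \eqref{eq:finite_geq}, which are reused verbatim to establish finite completeness in Theorem~\ref{thm:Characterization_Completeness}; your separation-based witness depends on $\rho$ and does not yield that stronger statement. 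Your version is marginally shorter for Theorem~\ref{thm:On_Completeness} alone, but note that your $x=0$ argument already contains everything needed to run the paper's bootstrap, so the appeal to Hahn--Banach is dispensable.
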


So each coherence criterion presented here is complete.
As consequences, we can restate the coherence criteria by the following four classes (see also Fig. \ref{fig:convex}):
\begin{enumerate}
	\item [{\rm(A):}] $(\mathbb{W}_r, D_r,I_r)$ with $r$ a positive real number. A state $\rho\in \mathbb{D}$ is coherent if and only if there exists $W\in \mathbb{W}_r$ such that $\tr[W\rho]< 0$ or $\tr[W\rho]> r.$
	\item [{\rm($\mathrm{B}_1$):}] $(\mathbb{W}_>, D_>,I_>)$. A state $\rho\in \mathbb{D}$   is coherent if and only if there exists $W\in \mathbb{W}_>$ such that $\tr[W\rho]< 0.$
	\item [\rm($\mathrm{B}_2$):] $(\mathbb{W}_\geq, D_\geq,I_\geq)$. A state $\rho\in \mathbb{D}$ is coherent if and only if there exists $W\in \mathbb{W}_\geq$ such that $\tr[W\rho]< 0.$
	\item [{\rm(C):}]$(\mathbb{W}_0, D_0,I_0)$. A state $\rho\in \mathbb{D}$ is coherent if and only if there exists $W\in \mathbb{W}_0$ such that $\tr[W\rho]\neq 0.$
\end{enumerate}

A coherence criterion ($\mathbb{W}_x, D_x,I_x$) is called finitely completable if all the coherent states could be detected by a finite set of coherent witnesses in $\mathbb{W}_x^c.$ That is, there exists a finite set $\{W_i\}_{i=1}^n\subseteq \mathbb{W}_x^c$ such that
$$ \mathbb{D}\setminus \Delta=\bigcup_{i=1}^n \mathbb{C}_x[W_i]. $$
Otherwise, we call it finitely incompletable.

\begin{thm}[Characterization of finite completeness]\label{thm:Characterization_Completeness}
Let $x$ be an element in $\mathcal{X}.$ The coherence criterion ($\mathbb{W}_x, D_x,I_x$) is   finitely completable if and only if   $x$ is  $0$ or  $\geq$.
\end{thm}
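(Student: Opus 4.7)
The plan is to prove the statement separately on the four classes (A), $(\mathrm{B}_1)$, $(\mathrm{B}_2)$, (C) from the restatement after Theorem \ref{thm:On_Completeness}. I would establish finite completeness for $x\in\{0,\geq\}$ by exhibiting an explicit finite family of witnesses, and finite incompleteness for $x$ a positive real and for $x=\,>$ by a uniform perturbation of the maximally mixed state.

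For the positive direction, consider the collection
\[
W^{(1)}_{jk} := |j\rangle\langle k| + |k\rangle\langle j|,\qquad W^{(2)}_{jk} := i\bigl(|j\rangle\langle k| - |k\rangle\langle j|\bigr),
\]
for $1\leq j<k\leq d$. Each has vanishing diagonal and nonzero eigenvalues $\pm 1$ on the $\{|j\rangle,|k\rangle\}$ subspace, so each lies in $\mathbb{W}_0\cap\Lambda_-=\mathbb{W}_0^c$. For any coherent $\rho$ some off-diagonal entry $\rho_{jk}\neq 0$, and since $\tr[W^{(1)}_{jk}\rho]=2\,\mathrm{Re}(\rho_{jk})$ and $\tr[W^{(2)}_{jk}\rho]=2\,\mathrm{Im}(\rho_{jk})$, at least one of these traces is nonzero, so $\rho$ is detected; this settles case (C). For $(\mathrm{B}_2)$, enlarge the family to $\{\pm W^{(a)}_{jk}\}$: all elements still lie in $\mathbb{W}_\geq^c$, and a nonzero $\tr[W^{(a)}_{jk}\rho]$ becomes strictly negative on either $W^{(a)}_{jk}$ or its sign flip, so every coherent $\rho$ satisfies the $\mathbb{W}_\geq$ criterion for some member of the list.

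For the incompleteness direction, fix an arbitrary finite collection $\{W_1,\dots,W_n\}\subseteq\mathbb{W}_x^c$ and set $Y:=|1\rangle\langle 2|+|2\rangle\langle 1|$. The Hermitian matrix $\rho_\epsilon:=\mathbb{I}_d/d+\epsilon Y$ has spectrum $\{1/d+\epsilon,\,1/d-\epsilon,\,1/d,\dots,1/d\}$, so for $0<|\epsilon|\leq 1/d$ it is a valid density matrix, and the nonzero $(1,2)$ off-diagonal makes it coherent. Expanding,
\[
\tr[W_i\rho_\epsilon] \;=\; \tfrac{1}{d}\tr[W_i] + \epsilon\,\tr[W_iY].
\]
In case (A) with $\tr[W_i]=r$, the value $r/d$ lies strictly inside $(0,r)$ for $d\geq 2$, and choosing $|\epsilon|\cdot\max_i|\tr[W_iY]|<r/d$ keeps each $\tr[W_i\rho_\epsilon]$ inside $I_r=[0,r]$. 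In case $(\mathrm{B}_1)$, with $t:=\min_i\tr[W_i]>0$, imposing $|\epsilon|\cdot\max_i|\tr[W_iY]|<t/d$ keeps every $\tr[W_i\rho_\epsilon]$ strictly positive, hence inside $I_>$. In either case no $W_i$ detects $\rho_\epsilon$, contradicting any hypothetical finite completion.

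The main obstacle is picking the right perturbation and incoherent base point. The maximally mixed state $\mathbb{I}_d/d$ is the crucial choice because it forces $\tr[W_i\delta_0]=\tr[W_i]/d$ to sit uniformly in the strict interior of $I_x$ over the finite list, providing a slack that a small off-diagonal perturbation cannot overcome. This uniform interior slack is precisely what disappears when $x=0$ (the incoherent image collapses to $\{0\}$) and when $x=\geq$ (the image $[0,\infty)$ has no finite upper barrier to play with), which is why finite completeness becomes possible in those two cases.
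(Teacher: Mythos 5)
Your proof is correct and follows essentially the same route as the paper: the finite families $\{W^{(a)}_{jk}\}$ (and their sign flips) for the cases $x=0$ and $x=\,\geq$, and the perturbed maximally mixed state $\mathbb{I}_d/d+\epsilon Y$ with an $\epsilon$ chosen small relative to the uniform interior slack $\tr[W_i]/d$ for the cases $x\in(0,\infty)$ and $x=\,>$. The only cosmetic difference is that you write the linear term as $\epsilon\,\tr[W_iY]$ where the paper writes $\epsilon\,\tr[(W-\Delta(W))Y]$; these coincide since $Y$ has zero diagonal.
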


See appendix \ref{appen:thm2}  for the proof. From Theorem \ref{thm:Characterization_Completeness}, only those criteria    that include all observables with traces   zero  are finitely completable. From this sense, the types $(A)$ and  $(B_1)$ are quite different from the types $(B_2)$ and $(C).$  Although  the criteria  or the figures  (presented in Fig. 
\ref{fig:convex}) of types $(B_1)$ and $(B_2)$  seem the same, they are in fact inequivalent.

 In the following we discuss the conditions when the finite intersections of $\mathbb{C}_x[{W}]$ maybe empty, for which the case with $x$ being $\geq$ has been discussed in Ref. \cite{Wang21}. 

\begin{thm}[Property of finite intersection]\label{thm:Common_State}
Let $x\in \mathcal{X}$ and $\mathcal{W}:=\{W_i\}_{i=1}^n\subseteq \mathbb{W}_x^c.$  Then   the following statements hold.
\begin{enumerate}
\item [{\rm(a)}]For the case $x\in (0,\infty)$ and any subset $\mathcal{W}'$ with cardinality $|\mathcal{W}'|\geq \lceil\frac{n}{2}\rceil$, if there always exist $t_W >0$ for each $W\in \mathcal{W}'$ and $\sum_{W\in \mathcal{W}'} t_W=1$ such that $  \sum_{W\in \mathcal{W}'}  t_W W \geq \boldsymbol{0}$, then  $ \bigcap_{i=1}^n \mathbb{C}_x[{W_i}] = \emptyset.$
  \item [{\rm(b)}] If $x$ is $>$ or $\geq,$  then
      $ \bigcap_{i=1}^n \mathbb{C}_x[{W_i}] = \emptyset
$ if and only if there is some nonnegative  $t_i\geq 0$ and $\sum_{i=1}^n t_i=1$ such that $\sum_{i=1}^n t_iW_i$ is positive semidefinite.
 \item [\rm(c)] If $x=0,$ there exists a common coherent state that can be detected by all of $W_i$ via criterion of type (C). That is, we always have
    $$
\mathcal{C}_0[{\mathcal{W}}]:=\bigcap_{i=1}^n \mathbb{C}_0[{W_i}] \neq \emptyset.
$$
Moreover, the cardinality of $\mathcal{C}_0[{\mathcal{W}}]$ is infinity. That is, given any set of finitely many coherent witnesses,  they     share with infinitely many common coherent states.
\end{enumerate}
\end{thm}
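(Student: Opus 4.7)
The plan is to handle the three parts separately because the sets $\mathbb{C}_x[W]$ take distinct shapes in the three regimes (as illustrated in Fig.~\ref{fig:convex}): for positive $x$ a disjoint union $\{\tr[W\rho]<0\}\cup\{\tr[W\rho]>x\}$; for $x\in\{>,\geq\}$ a single half-space $\{\tr[W\rho]<0\}$; and for $x=0$ the complement of the hyperplane $\tr[W\rho]=0$ inside $\mathbb{D}$. For (a), I will run a pigeonhole argument on a hypothetical $\rho\in\bigcap_i\mathbb{C}_x[W_i]$: partition the indices into $L=\{i:\tr[W_i\rho]<0\}$ and $R=\{i:\tr[W_i\rho]>x\}$, and observe that one of them has cardinality at least $\lceil n/2\rceil$. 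The hypothesis then supplies positive weights $\{t_i\}$ on that subset with $M:=\sum t_iW_i\geq\boldsymbol{0}$. If the large subset is $L$ then $\tr[M\rho]<0$ contradicts positivity of $M$ and $\rho$; if it is $R$ then $\tr[M\rho]>x$ contradicts $\tr[M\rho]\leq\lambda_{\max}(M)\leq\tr[M]=\sum t_i\tr[W_i]=x$, the last step valid because $M\geq\boldsymbol{0}$. Either way no such $\rho$ can exist.

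For (b), direction $(\Leftarrow)$ is immediate: if $\sum t_iW_i\geq\boldsymbol{0}$ then $\sum t_i\tr[W_i\rho]\geq 0$ for every density matrix, preventing all $\tr[W_i\rho]$ from being simultaneously negative. For the converse I will use a separating-hyperplane argument. The image $S:=\{(\tr[W_1\rho],\ldots,\tr[W_n\rho]):\rho\in\mathbb{D}\}\subset\mathbb{R}^n$ is convex and compact as the linear image of $\mathbb{D}$, and the empty-intersection hypothesis reads $S\cap\mathbb{R}^n_{<0}=\emptyset$. Because $S$ and $\mathbb{R}^n_{<0}$ are disjoint convex sets, a separating hyperplane yields $t\in\mathbb{R}^n\setminus\{0\}$ and $c\in\mathbb{R}$ with $t\cdot v\geq c$ on $S$ and $t\cdot v\leq c$ on $\mathbb{R}^n_{<0}$. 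Letting $v\to 0^-$ within $\mathbb{R}^n_{<0}$ forces $c\geq 0$; a negative $t_i$ would make $t\cdot v$ unbounded above on $\mathbb{R}^n_{<0}$, so every $t_i\geq 0$. Rescaling to $\sum t_i=1$, the inequality $\sum t_i\tr[W_i\rho]\geq 0$ for all density matrices is exactly $\sum t_iW_i\geq\boldsymbol{0}$.

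For (c), every $W\in\mathbb{W}_0$ has nonnegative diagonal summing to zero, hence zero diagonal and $\tr[W]=0$; nontriviality forces $W\neq\boldsymbol{0}$. I will construct common coherent states as perturbations $\rho_\epsilon:=\frac{\mathbb{I}_d}{d}+\epsilon H$ with $H$ a traceless Hermitian matrix, which yields $\tr[W_i\rho_\epsilon]=\epsilon\tr[W_iH]$. Each functional $H\mapsto\tr[W_iH]$ is nonzero on the real vector space $\mathcal{H}_0$ of traceless Hermitian matrices (it takes the value $\|W_i\|_F^2>0$ at $H=W_i$), so its kernel is a proper subspace. Since a real vector space is never covered by finitely many proper subspaces, there exists $H\in\mathcal{H}_0$ with $\tr[W_iH]\neq 0$ for every $i$; for all sufficiently small $\epsilon>0$ the matrix $\rho_\epsilon$ is a density matrix lying in every $\mathbb{C}_0[W_i]$, and letting $\epsilon$ vary in an interval produces infinitely many such states. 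The main obstacle will be the converse of (b), where the separation argument must be handled carefully to guarantee that the separating functional has nonnegative, nonzero coefficients that can be renormalized to a probability vector; parts (a) and (c) become short once the pigeonhole split and the perturbation ansatz are in place.
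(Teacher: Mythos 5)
Your proposal is correct. Parts (a) and (b) follow essentially the paper's route: the same pigeonhole split into $\mathcal{I}_0=\{i:\tr[W_i\rho]<0\}$ and $\mathcal{I}_x=\{i:\tr[W_i\rho]>x\}$ together with the observation that a PSD combination $M$ with $\tr[M]=x$ satisfies $\boldsymbol{0}\leq M\leq x\mathbb{I}_d$; in fact you treat both branches of the pigeonhole explicitly, whereas the paper dismisses the $\mathcal{I}_0$ branch with a ``without loss of generality'' even though the two branches use different inequalities, so your version is slightly more careful. For (b) the paper simply cites Ref.~\cite{Wang21} and notes that the case $x={>}$ is identical because $D_>, I_>$ coincide with $D_\geq, I_\geq$; your separating-hyperplane argument (compact convex image $S$ of $\mathbb{D}$ disjoint from the open negative orthant, nonnegativity of the separating functional, renormalization) is a self-contained substitute for that citation and is sound, including the final step that $\tr[A\rho]\geq 0$ for all $\rho\in\mathbb{D}$ iff $A\geq\boldsymbol{0}$. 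Part (c) is where you genuinely diverge: the paper proceeds by induction on $n$, mixing a state detected by $\mathcal{W}_1=\{W_1,\dots,W_{n-1}\}$ with one detected by $\mathcal{W}_2=\{W_2,\dots,W_n\}$ and arguing that the polynomial $f(\lambda)=\prod_i\tr[W_i\pi_\lambda]$ has only finitely many zeros, then derives infinitude by contradiction; you instead perturb the maximally mixed state, $\rho_\epsilon=\mathbb{I}_d/d+\epsilon H$, use $\tr[W_i]=0$ to reduce everything to $\tr[W_iH]\neq 0$, and invoke the fact that a real vector space is not a finite union of proper subspaces (each kernel being proper since $\tr[W_iW_i]>0$). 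Your route is shorter, avoids the induction entirely, and delivers the infinite family $\{\rho_\epsilon\}$ in the same stroke; the paper's route is more elementary in that it only needs that a nonzero one-variable polynomial has finitely many roots. Both are valid.
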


The proof is in  appendix \ref{appen:thm3}. From  Theorem \ref{thm:Common_State}, we conclude that the coherence criterion of type (C) is different from the other types.  Moreover,  we can also conclude that the coherence criterion of type (A) is also different from type (B) (that is, type $(\mathrm{B}_1)$ and  type $(\mathrm{B}_2)$) by the following examples.

Firstly,  there do exist $W_i$'s $\in \mathbb{W}_x^c$ which  satisfy
 all the conditions of (a) in Theorem \ref{thm:Common_State}. Take $d=n=3 $ for example,
 $$
 W_1=\left[\begin{array}{ccc}
  \frac{x}{2}&  0 & \frac{x}{6}\\[2mm]
    0&  \frac{x}{2} & \frac{x}{6}\\[2mm]
        \frac{x}{6}& \frac{x}{6} & 0\\[2mm]
 \end{array}\right],~
  W_2=\left[\begin{array}{ccc}
  0&  \frac{x}{6} & \frac{x}{6}\\[2mm]
    \frac{x}{6}&  \frac{x}{2} & 0\\[2mm]
        \frac{x}{6}& 0 & \frac{x}{2}\\[2mm]
 \end{array}\right],~
   W_3=\left[\begin{array}{ccc}
  \frac{x}{2}&  \frac{x}{6} &  0\\[2mm]
    \frac{x}{6}&  0 &  \frac{x}{6}\\[2mm]
       0&  \frac{x}{6} & \frac{x}{2}\\[2mm]
 \end{array}\right].
 $$
 One easily checks that $\frac{1}{2}W_1+ \frac{1}{2}W_2,$ $\frac{1}{2}W_2+ \frac{1}{2}W_3,$ $\frac{1}{2}W_1+ \frac{1}{2}W_3$ and $\frac{1}{3}W_1+ \frac{1}{3}W_2 +\frac{1}{3}W_3$ satisfy the assumed conditions. Hence $\mathbb{C}_x[W_1] \cap  \mathbb{C}_x[W_2]\cap  \mathbb{C}_x[W_3]=\emptyset.$ Moreover, for the case $d=n=2$, if we set
 $$
 W_1=\left[\begin{array}{cc}
 x & -\frac{x}{2}\\[2mm]
 -\frac{x}{2}& 0
 \end{array}\right],\
  W_2=\left[\begin{array}{cc}
 0 &  \frac{x}{2}\\[2mm]
  \frac{x}{2}& x
 \end{array}\right],\  \rho=\left[\begin{array}{cc}
 \frac{1}{4} &  \frac{1}{3}\\[2mm]
  \frac{1}{3}& \frac{3}{4}
 \end{array}\right],
 $$
 then $\frac{1}{2}W_1+\frac{1}{2}W_2=x \mathbb{I}_2.$ Hence $\mathbb{C}_>[W_1] \cap  \mathbb{C}_>[W_2] =\mathbb{C}_\geq[W_1] \cap  \mathbb{C}_\geq[W_2]=\emptyset.$   However, $\tr[W_1\rho]= -\frac{x}{12}<0$ and $\tr[W_2\rho]= \frac{13}{12}x>x.$ Therefore, $\rho\in \mathbb{C}_x[W_1] \cap  \mathbb{C}_x[W_2].$  
 
 Now we are studying more  about the  inclusion  and identity relations  among the family of sets $\mathbb{C}_x[{W}]$ where $W\in  \mathbb{W}_x$.  

\begin{thm}[Property of inclusion or identity]\label{thm:inclusion_State}
Let $x\in \mathcal{X}$ and $W_1, W_2\in \mathbb{W}_x^c.$ The following statements hold.
\begin{enumerate}
\item [{\rm(a)}] If $x\in (0,\infty),$ then $\mathbb{C}_x[{W_1}]=\mathbb{C}_x[{W_2}]$ if and only if $W_1=W_2$ or $W_1+W_2=x \mathbb{I}_d$ (the statement is true only if $d=2$).
  \item [{\rm(b)}]  If  $x$ is $>$ or $\geq$, $\mathbb{C}_x[{W_1}]=   \mathbb{C}_x[{W_2}]$ if and only if there exists $r\in\mathbb{R}_+:=\{r\in\mathbb{R} \ \big |\ r>0\}$ such that  $W_2= r W_1.$  Moreover,  $\mathbb{C}_x[{W_2}]\subseteq    \mathbb{C}_x[{W_1}]$  if and only if  there exists $a\in \mathbb{R}_+$ and a positive semidefintie operator $P$ such that $W_2= a W_1+P$.
 \item [\rm(c)] If $x=0,$ then $\mathbb{C}_0[W_1]=\mathbb{C}_0[W_2] $ if and only if there exist some $r\in \mathbb{R}\setminus\{0\}$ such that $W_2=rW_1$ (we call such two witnesses $\mathbb{R}$ equivalent). If $W_1$ and $W_2$ are not $\mathbb{R}$ equivalent, then $\mathbb{C}_0[W_1]\not\subseteq \mathbb{C}_0[W_2]$ and $\mathbb{C}_0[W_2]\not\subseteq \mathbb{C}_0[W_1].$
\end{enumerate}
\end{thm}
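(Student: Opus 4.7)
The plan is to prove part (b) first (the technical core), then leverage it for part (a), and handle part (c) separately via an affine-hyperplane argument. Throughout, when $x$ is $>$ or $\geq$, $\mathbb{C}_x[W] = \{\rho \in \mathbb{D} : \tr[W\rho] < 0\}$; for $x \in (0,\infty)$, $\mathbb{C}_x[W] = \mathbb{C}_\geq[W] \cup \mathbb{C}_\geq[x\mathbb{I}_d - W]$; and for $x = 0$, $\mathbb{C}_0[W] = \{\rho \in \mathbb{D} : \tr[W\rho] \neq 0\}$.

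For part (b), the ``if'' direction of the inclusion claim is immediate. For the converse, I would translate $\mathbb{C}_x[W_2] \subseteq \mathbb{C}_x[W_1]$ via contrapositive and positive homogeneity into: for every positive semidefinite $X$, $\tr[W_1 X] \geq 0$ implies $\tr[W_2 X] \geq 0$. Then consider the closed convex cone $\mathcal{K} := \{aW_1 + P : a \geq 0,\ P \geq 0\} \subseteq \mathbb{H}$, closed because the PSD cone is closed and $\{rW_1:r\geq 0\}$ is a polyhedral ray. A standard bipolar/separation argument shows $W_2 \in \mathcal{K}$, so $W_2 = aW_1 + P$ with $a \geq 0$ and $P \geq 0$; the case $a = 0$ would force $W_2 \geq 0$, hence $\mathbb{C}_x[W_2] = \emptyset$, contradicting $W_2 \in \mathbb{W}_x^c$. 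For the equality claim, applying the inclusion result in both directions gives $W_2 = aW_1 + P$ and $W_1 = a'W_2 + Q$ with $a, a' > 0$; then $(1 - aa')W_2 = aQ + P$, and the cases $aa' > 1$ and $aa' < 1$ render $W_2$ either negative-semidefinite (hence zero, since $W_2$ has nonnegative diagonal) or positive-semidefinite respectively, both contradicting nontriviality. Hence $aa' = 1$, $P = Q = 0$, and $W_2 = aW_1$.

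For part (a), I would write $\mathbb{C}_x[W_i] = A_i \sqcup B_i$ with $A_i := \mathbb{C}_\geq[W_i]$ and $B_i := \mathbb{C}_\geq[x\mathbb{I}_d - W_i]$, both convex and disjoint (since on $A_i$ one has $\tr[W_i\rho] < 0$, while on $B_i$ one has $\tr[W_i\rho] > x$). Connectedness of the convex set $A_1$ together with $A_1 \subseteq A_2 \cup B_2$ forces $A_1 \subseteq A_2$ or $A_1 \subseteq B_2$, and similarly for $B_1$; two disjoint open convex sets cannot have a convex union, which eliminates the mixed alignments and leaves only the ``straight'' case ($A_1 \subseteq A_2,\ B_1 \subseteq B_2$) and the ``cross'' case ($A_1 \subseteq B_2,\ B_1 \subseteq A_2$). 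In the straight case, applying the equality statement of part (b) yields $W_2 = W_1$ after using $\tr[W_i] = x$. In the cross case, the equality statement of part (b) combined with the trace constraint yields the pair $W_2 = x\mathbb{I}_d - (d-1)W_1$ and $W_1 = x\mathbb{I}_d - (d-1)W_2$; substituting one into the other and ruling out the PSD solution $W_i = \tfrac{x}{d}\mathbb{I}_d$ by nontriviality forces $d = 2$, in which case both relations collapse to $W_1 + W_2 = x\mathbb{I}_d$. The degenerate subcases where some $A_i$ or $B_i$ is empty (necessarily because the other is nonempty, by nontriviality of $W_i$, together with the bound that an element of $\mathbb{W}_x$ has eigenvalues in $[0,x]$ when PSD) reduce directly via part (b).

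For part (c), the ``if'' direction is immediate. For the converse I would view $f_i(\rho) := \tr[W_i \rho]$ as affine functionals on $\mathcal{A} := \{A \in \mathbb{H} : \tr[A] = 1\}$. Because $\tr[W_i] = 0$, the maximally mixed state $\mathbb{I}_d/d$ lies on $\{f_i = 0\}$ and in the relative interior of $\mathbb{D}$, so $\{f_i = 0\} \cap \mathbb{D}$ has nonempty relative interior in the hyperplane $\{f_i = 0\} \cap \mathcal{A}$, which is therefore determined by this intersection. Hence both the equality hypothesis and the one-sided inclusion $\mathbb{C}_0[W_1] \subseteq \mathbb{C}_0[W_2]$ (equivalently $\{f_2 = 0\} \cap \mathbb{D} \subseteq \{f_1 = 0\} \cap \mathbb{D}$) force the two hyperplanes to coincide, making $f_1$ and $f_2$ proportional on $\mathcal{A}$; combined with tracelessness this gives $W_2 = rW_1$ for some $r \in \mathbb{R} \setminus \{0\}$. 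The main obstacles I anticipate are the duality step in (b)---establishing closedness of $\mathcal{K}$ cleanly and upgrading $a \geq 0$ to $a > 0$---and the coefficient bookkeeping in the cross case of (a) that isolates $d = 2$.
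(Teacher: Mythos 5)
Your proposal is correct, and for parts (a) and (c) it follows essentially the paper's own route: the paper likewise writes $\mathbb{C}_x[W_i]=\mathbb{C}_\geq[W_i]\sqcup\mathbb{C}_\geq[x\mathbb{I}_d-W_i]$, matches the convex pieces using convexity/connectedness, and reduces to part (b) to obtain $(d-1)W_1=x\mathbb{I}_d-W_2$ and $(d-1)W_2=x\mathbb{I}_d-W_1$, which forces $d=2$; and your relative-interior argument in (c) is exactly the content of the paper's Lemma \ref{lemma:spaceEqual} (that $S_W$ affinely spans the whole hyperplane $\tr[WX]=0$ because it contains a positive state near $\mathbb{I}_d/d$). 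The genuine difference is in (b): the paper proves the inclusion characterization by normalizing to trace one, citing Lemma 1 and Corollary 1 of Ref.~\cite{Wang21b}, and then running a four-case analysis on $(\tr[W_1],\tr[W_2])$ with an $\epsilon\mathbb{I}_d$ regularization and a limit in the traceless--traceless case, whereas your conic-duality argument ($W_2\in(\mathcal{K}^{*})^{*}=\overline{\mathcal{K}}$ for $\mathcal{K}=\{aW_1+P:\ a\geq 0,\ P\geq\boldsymbol{0}\}$) is self-contained, and deriving the equality statement from the two-sided inclusion (rather than via the span lemma, as the paper does) is a clean shortcut. One step needs tightening: closedness of $\mathcal{K}$ does not follow merely from $\mathbb{R}_{\geq 0}W_1$ being a polyhedral ray, since the sum of a ray and a closed convex cone can fail to be closed; what saves you is pointedness --- if $aW_1+P=\boldsymbol{0}$ with $a\geq0$ and $P\geq\boldsymbol{0}$, then $aW_1\leq\boldsymbol{0}$, and because $W_1$ has nonnegative diagonal and is nonzero this forces $a=0$ and $P=\boldsymbol{0}$, so the standard recession-cone criterion gives closedness. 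Two smaller points worth spelling out: in the straight case of (a) you need the symmetric containments ($A_2\subseteq A_1$ rather than $A_2\subseteq B_1$, the latter being impossible since $A_1\subseteq A_2$ while $A_1\cap B_1=\emptyset$) before invoking (b), and the degenerate subcase where exactly one of $B_1,B_2$ is empty is excluded because a disjoint union of two nonempty relatively open convex sets is disconnected and hence cannot equal a single convex set.
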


 The proof is presented in appendix \ref{appen:thm4}.
With the above results, we are ready to give a full classification of the variant coherence criteria.

\vskip 5pt

\begin{thm}\label{thm:Class_Coherence_Criteroin}
There are exactly four inequivalent classes
among the variant coherence criteria $\{(\mathbb{W}_x, D_x,I_x)\ \big| \  x\in  \mathcal{X}\}.$
\end{thm}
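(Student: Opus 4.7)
The plan is to collapse the uncountable family of criteria into a handful of candidate classes and then separate the survivors using invariants of the equivalence relation. The discussion preceding Theorem \ref{thm:Class_Coherence_Criteroin} already establishes that for any $x,y>0$ the map $W\mapsto (y/x)W$ is a detection-set-preserving bijection $\mathbb{W}_x^c\to\mathbb{W}_y^c$, so every $(\mathbb{W}_r,D_r,I_r)$ with $r>0$ collapses into the single class (A). Together with $x\in\{0,>,\geq\}$ this leaves \emph{at most} the four candidate classes (A), $(B_1)$, $(B_2)$, (C), so it suffices to verify pairwise inequivalence.

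To do this I would rely on two invariants of the equivalence relation. The first is finite completability, which is manifestly invariant because any equivalence $F$ carries the collection $\{\mathbb{C}_x[W]:W\in\mathbb{W}_x^c\}$ bijectively onto $\{\mathbb{C}_y[F(W)]:W\in\mathbb{W}_x^c\}$ as indexed families of subsets of $\mathbb{D}$. By Theorem \ref{thm:Characterization_Completeness} this property holds precisely for $x=0$ and $x=\geq$, separating $\{(B_2),(C)\}$ from $\{(A),(B_1)\}$. The second invariant is the existence of a nontrivial witness $W$ whose detection set $\mathbb{C}[W]$ is non-convex, which is invariant because each individual $\mathbb{C}_x[W]$ is preserved. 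For $(B_1)$ and $(B_2)$ every detection set equals $\{\rho\in\mathbb{D}:\tr(W\rho)<0\}$ and is therefore convex. For (A) the explicit $2\times 2$ witness from the preceding remarks, $W=(x-r)|1\rangle\langle 1|+\sqrt{rx}(|1\rangle\langle 2|+|2\rangle\langle 1|)+r|2\rangle\langle 2|$ with $0<r<x$, has spectral eigenvectors $|v_\pm\rangle$ satisfying $\tr[W|v_+\rangle\langle v_+|]>x$ and $\tr[W|v_-\rangle\langle v_-|]<0$, so both belong to $\mathbb{C}_x[W]$ while their midpoint $\tfrac{1}{2}\mathbb{I}_2$ satisfies $\tr[W\cdot\tfrac{1}{2}\mathbb{I}_2]=\tfrac{x}{2}\in I_x$ and hence does not. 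For (C) the traceless $W=|1\rangle\langle 2|+|2\rangle\langle 1|$ detects both $|+\rangle\langle+|$ and $|-\rangle\langle-|$ but not their midpoint $\tfrac{1}{2}\mathbb{I}_2$. Thus the second invariant separates $\{(A),(C)\}$ from $\{(B_1),(B_2)\}$.

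Assembling the two invariants yields the four distinct signatures $(A)=(\text{no},\text{yes})$, $(B_1)=(\text{no},\text{no})$, $(B_2)=(\text{yes},\text{no})$, $(C)=(\text{yes},\text{yes})$, so no two of the four classes are equivalent and the count is exactly four. I expect no real obstacle here, since all the hard work already resides in Theorems \ref{thm:On_Completeness}--\ref{thm:inclusion_State}; the only mildly delicate step is the choice of invariants. One could equally well discriminate the finer pairs using Theorem \ref{thm:Common_State}(c) (universal nonempty finite intersection in class (C)) or the ``same detection set'' equivalence-class sizes from Theorem \ref{thm:inclusion_State} (at most two in (A), rays in $(B_1)$ and $(B_2)$, lines in (C)), but the pair (finite completability, existence of non-convex $\mathbb{C}[W]$) gives the cleanest $2\times 2$ match with the four classes.
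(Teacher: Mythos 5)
Your proposal is correct, and it takes a mildly but genuinely different route from the paper's own proof. The paper likewise first collapses all $r>0$ into the single class (A) via the scaling bijection $W\mapsto (y/x)W$, but it then separates the four candidates with \emph{three} distinct arguments: it isolates (C) using Theorem \ref{thm:Common_State} (in types (A), ($\mathrm{B}_1$), ($\mathrm{B}_2$) one can exhibit two witnesses with disjoint detection sets, which part (c) of that theorem forbids when $x=0$); it isolates ($\mathrm{B}_2$) from (A) and ($\mathrm{B}_1$) by finite completability; and it separates (A) from ($\mathrm{B}_1$) by convexity of the detection sets. You instead use only two invariants --- finite completability and the existence of a non-convex detection set --- and note that the four classes realize all four signatures in the resulting $2\times 2$ table, which dispenses with any appeal to Theorem \ref{thm:Common_State}. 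Both of your invariants are correctly shown to be preserved by an equivalence $F$ (since $F$ matches detection sets exactly), and your non-convexity witnesses for (A) and (C) check out: the two detected eigenprojectors have expectation values $\lambda_+>x$ and $\lambda_-<0$ (respectively $\pm 1$ for the traceless example), while their midpoint has expectation value $\tfrac{x}{2}\in I_x$ (respectively $0\in I_0$). The only cosmetic caveat is that for general $d$ your $2\times 2$ examples should be read as padded with zeros, and the midpoint of the two eigenprojectors is then $\tfrac{1}{2}\left(|v_+\rangle\langle v_+|+|v_-\rangle\langle v_-|\right)$ rather than $\tfrac{1}{2}\mathbb{I}_d$; the trace computation, and hence the argument, is unaffected. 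Your alternative discriminators (Theorem \ref{thm:Common_State}(c) or the equivalence-class sizes from Theorem \ref{thm:inclusion_State}) would also work, the former being essentially what the paper uses for (C).
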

\begin{proof}
Firstly, we have shown that the criteria of type (A) are all equivalent.

Secondly, the criterion of type (C) is not equivalent to other types. In fact, in the settings of (A), ($\mathrm{B}_1$) and ($\mathrm{B}_2$), there are always two witnesses $W_1$ and $W_2$ such that $\mathbb{C}_x[W_1]\cap \mathbb{C}_x[W_2]=\emptyset.$ However, this is not the case when $x=0$.

Thirdly, criterion of type ($\mathrm{B}_2$), i.e., $(\mathbb{W}_\geq , D_\geq ,I_\geq)$, is not equivalent to the types (A) and ($\mathrm{B}_1$). In fact, criterion of type ($\mathrm{B}_2$) is finitely completable but the other twos not.

Moreover, $\mathbb{C}_>[W]$ are always convex for each $W\in \mathbb{W}_>^{c}$. But this fact may not be true for $\mathbb{C}_x[W]$ when $x>0$. Therefore, criteria of type (A) and ($B_1$) are inequivalent. To conclude we have exactly four inequivalent classes of coherence criteria. \end{proof}

\section{Conclusion and Discussions}\label{sec:con}

Based on prior knowledge of observables, we have presented a series coherence criteria which detect coherence better than the usual ones without prior knowledge of observables. Moreover, through a systematic and rigorous study on the  properties of criteria such as completeness, finite completeness, finite intersection and inclusion, we have singled out four classes of inequivalent coherence criteria. These results help to the deepening of our understanding of coherence detection methodologies, and thereby highlight on advancements in quantum technologies.

There are also some interesting problems left such as the condition of inclusion of $\mathbb{C}_x[W]$ when $x>0$ for the part (a) of Theorem \ref{thm:inclusion_State}. Further exploration of coherence detection enriched by prior knowledge may promise to unlock even greater potential. It would be also appealing to extend our scheme to deal with the case of entanglement witnesses. Our research may serve as a catalyst for future investigations on quantum coherence detection, as well as detections of other resources like quantum correlations.

\begin{acknowledgements}
This work is supported by National Natural Science Foundation of China (Grant Nos. 12371458, 62072119, 12075159, 12171044), the Guangdong Basic and Applied Basic Research Foundation under Grants No. 2023A1515012074, Key Research, Development Project of Guangdong Province under Grant No.2020B0303300001, Key Lab of Guangzhou for Quantum Precision Measurement under Grant No.202201000010, the Academician Innovation Platform of Hainan Province, and the Science and
Technology Planning Project of Guangzhou under Grants No. 2023A04J1296.
\end{acknowledgements}

\medskip

\appendix
\setcounter{equation}{0}
\setcounter{table}{0}
\setcounter{section}{0}
\setcounter{lemma}{0}
\setcounter{thm}{0}

\section{Two  Lemmas  and a proof  of  the claim: $\mathbb{W}_x^c= \mathbb{W}_x  \cap \Lambda_-$
 }\label{appen:lemma}

In order to prove the claim  $\mathbb{W}_x^c= \mathbb{W}_x  \cap \Lambda_-$, we need the following lemma. 

\begin{lemma}\label{lemma:FindPostive}
	If $W\in \mathbb{H}_\geq \setminus \{\mathbf{0}\}$ is not positive semidefinite, then there exists some density matrices $\rho\in\mathbb{D}$ such that $\tr[W\rho]<0$. Moreover, there also exists positive density matrix $\sigma \in\mathbb{D}$ such that $\tr[W\sigma]=0.$
\end{lemma}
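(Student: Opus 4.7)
The plan for the first claim is to extract the negative eigenvalue of $W$ directly. Since $W\in \mathbb{H}_\geq\setminus\{\mathbf{0}\}$ is Hermitian and not positive semidefinite, it admits a spectral decomposition in which at least one eigenvalue $\lambda_- <0$ appears with eigenvector $|v_-\rangle$. Taking $\rho=|v_-\rangle\langle v_-|\in\mathbb{D}$ yields $\tr[W\rho]=\lambda_-<0$, so this part is essentially immediate and requires no further work.

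For the second claim I would construct $\sigma$ as a convex combination of the above $\rho$ with the maximally mixed state. The key observation is that $\tr[W]=\sum_i w_{ii}\geq 0$ because $W\in \mathbb{H}_\geq$, so $\tr[W\,\mathbb{I}_d/d]=\tr[W]/d\geq 0$. I would split the argument into two cases. If $\tr[W]=0$, all diagonal entries vanish and $\sigma=\mathbb{I}_d/d$ already satisfies $\tr[W\sigma]=0$ while being strictly positive. If $\tr[W]>0$, I consider the family
\begin{equation*}
\sigma_t = t\,|v_-\rangle\langle v_-| + (1-t)\,\mathbb{I}_d/d,\qquad t\in[0,1],
\end{equation*}
and observe that $f(t):=\tr[W\sigma_t]=t\lambda_- + (1-t)\tr[W]/d$ is continuous with $f(0)>0$ and $f(1)=\lambda_-<0$. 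By the intermediate value theorem there is some $t^\ast\in(0,1)$ with $f(t^\ast)=0$. Since $1-t^\ast>0$, the operator $\sigma_{t^\ast}$ still has a strictly positive contribution from $\mathbb{I}_d/d$, hence is positive definite, and it has unit trace by construction.

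The whole argument is short, and there is no real obstacle; the only minor subtlety is ensuring that the interpolating state remains strictly positive definite (and not merely positive semidefinite), which is automatic as long as the weight on $\mathbb{I}_d/d$ is nonzero, i.e.\ $t^\ast<1$, and this is guaranteed by $f(1)=\lambda_-<0\ne 0=f(t^\ast)$.
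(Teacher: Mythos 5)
Your argument is correct. The first claim is handled exactly as in the paper: take $\rho=|v_-\rangle\langle v_-|$ for a negative eigenvalue $\lambda_-$ of $W$. For the second claim you take a genuinely different route. The paper stays inside the spectral decomposition of $W$: writing $\alpha=\sum_{\lambda_i>0}\lambda_i$ and $\beta=\sum_{\lambda_j<0}|\lambda_j|$, it builds the full-rank operator $Q=\sum_{\lambda_i>0}\beta|e_i\rangle\langle e_i|+\sum_{\lambda_j<0}\alpha|e_j\rangle\langle e_j|+\sum_{\lambda_k=0}|e_k\rangle\langle e_k|$, checks $\tr[WQ]=\beta\alpha-\alpha\beta=0$, and normalizes; this implicitly uses $\tr[W]\ge0$ only to guarantee that $W$ has a positive eigenvalue so that $\alpha>0$. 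You instead interpolate between $|v_-\rangle\langle v_-|$ and the maximally mixed state and invoke the intermediate value theorem on $f(t)=t\lambda_-+(1-t)\tr[W]/d$, using $\tr[W]\ge0$ (the defining property of $\mathbb{H}_\geq$) directly as the sign condition at $t=0$. Your version needs only one negative eigenvector rather than the whole spectrum, and since $f$ is affine you could even write $t^\ast=\frac{\tr[W]/d}{\tr[W]/d-\lambda_-}$ explicitly and absorb your $\tr[W]=0$ case into the general formula ($t^\ast=0$). Both constructions deliver the strict positive definiteness of $\sigma$ that the lemma demands (the paper via the strictly positive weights $\alpha,\beta,1$; you via the $(1-t^\ast)\mathbb{I}_d/d$ component), which is the property actually used later in Lemma~\ref{lemma:spaceEqual}. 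The one cosmetic difference is that the paper's $\sigma$ commutes with $W$, but nothing downstream relies on that.
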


\begin{proof}
	By the given assumption, the matrix $W$ has both positive and negative eigenvalues. Suppose the spectral decomposition of $W$ is
	$$
	W=\sum_{\lambda_i >0} \lambda_i  |e_i\rangle\langle e_i| + \sum_{\lambda_j<0} \lambda_j  |e_j\rangle\langle e_j| +\sum_{\lambda_k =0} \lambda_k  |e_k\rangle\langle e_k|,$$
	where $\{|e_i\rangle\}_{\lambda_i>0}\cup  \{|e_j\rangle\}_{\lambda_j<0}\cup \{|e_k\rangle\}_{\lambda_k=0} $ is an orthonormal basis of the system $\mathcal{H}.$ For any fixed $j$ such that $\lambda_j<0,$  the state $\rho=|e_j\rangle\langle e_j|$ satisfies $\tr[W \rho]=\lambda_j<0.$ Set $\alpha:=\sum_{\lambda_i >0} \lambda_i$  and  $\beta:=\sum_{\lambda_j <0} |\lambda_j|.$ Clearly, $\alpha,\beta>0.$  We define
	$$ Q=\sum_{\lambda_i >0} \beta   |e_i\rangle\langle e_i| + \sum_{\lambda_j<0} \alpha  |e_j\rangle\langle e_j| +\sum_{\lambda_k =0}    |e_k\rangle\langle e_k|.$$
	Then $Q$ is a positive matrix. Moreover, we have $$\tr[WQ]= \beta\sum_{\lambda_i>0}  \lambda_i+\alpha  \sum_{\lambda_j<0} \lambda_j=\beta \alpha +\alpha \times (-\beta)=0.$$
	Then $\sigma=Q/\tr[Q]$ is a density matrix we wanted. That is, $\sigma$ is a positive matrix and satisfies $\tr[W\sigma]=0.$
\end{proof}

\vskip 5pt

\noindent {\bf The proof of the claim: $\mathbb{W}_x^c= \mathbb{W}_x  \cap \Lambda_-$.} Clearly, both sets are contains  in $\mathbb{W}_x\setminus \{\mathbf{0}\}.$
For any $W\in \mathbb{W}_x  \cap \Lambda_-$,  $W$ has some negative eigenvalue. By Lemma \ref{lemma:FindPostive}, there exists $\rho$ such that $\tr[W\rho]<0$. Hence, $\mathbb{C}_x[W]\neq \emptyset$ for such $W.$ 
 Therefore, $  \mathbb{W}_x  \cap \Lambda_- \subseteq \mathbb{W}_x^c$. 
 
 Now suppose that $W\in \mathbb{W}_x\setminus \{\mathbf{0}\}$ but $ W\notin  \mathbb{W}_x  \cap \Lambda_-$. So $W$ must be positive semidefinite (which is impossible when $x=0$) which implies that $\tr[W\rho]\geq 0$  for all $\rho\in\mathbb{D}.$ Therefore, when $x$ is $\geq$ or $>$,  $\mathbb{C}_x[W]=\emptyset.$ If $x$ is a positive real number, all the eigenvalues of $W$ must be less or equal to $x.$ Hence we have $\boldsymbol{0}\leq W \leq x \mathbb{I}_d.$ Therefore, $0\leq \tr[W \rho]\leq x$ for all $\rho\in\mathbb{D}$ and   $\mathbb{C}_x[W]=\emptyset$ for such $W.$  No matter what $x$ is, the set $\mathbb{C}_x[W]=\emptyset$. This implies that such $W$ is not an effective coherence witness, i.e., $W\notin \mathbb{W}_x^c.$ Hence we must have  $  \mathbb{W}_x  \cap \Lambda_- = \mathbb{W}_x^c$. \qed

\vskip 5pt

In the following, we present another lemma that will be reused  multiple times  throughout this paper.
 
\begin{lemma}\label{lemma:spaceEqual}
	Let $x\in \mathcal{X}$ and $W\in \mathbb{W}_x^c$ be a coherent witness. Define $S_W:=\{\rho\in \mathbb{D} | \ \tr[W\rho]=0\}$, $H_W:=\{H\in \mathbb{H} | \ \tr[W H]=0\}$ and $M_W:=\{X\in \mathrm{Mat}_d(\mathbb{C}) | \ \tr[WX]=0\}.$ Then the following sets are equal,
	$\mathrm{span}_{\mathbb{C}}(S_W)= \mathrm{span}_{\mathbb{C}}(H_W)=M_W$.
\end{lemma}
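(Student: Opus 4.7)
The plan is to close a chain of inclusions. Two sides are trivial: since density matrices are Hermitian, $S_W\subseteq H_W$, hence $\mathrm{span}_{\mathbb{C}}(S_W)\subseteq \mathrm{span}_{\mathbb{C}}(H_W)$; and since $M_W$ is already a complex subspace of $\mathrm{Mat}_d(\mathbb{C})$ containing $H_W$, we have $\mathrm{span}_{\mathbb{C}}(H_W)\subseteq M_W$. It remains to prove the two reverse inclusions $M_W\subseteq \mathrm{span}_{\mathbb{C}}(H_W)$ and $H_W\subseteq \mathrm{span}_{\mathbb{C}}(S_W)$.

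For $M_W\subseteq \mathrm{span}_{\mathbb{C}}(H_W)$ I would invoke the Hermitian--antihermitian split. Any $X\in \mathrm{Mat}_d(\mathbb{C})$ decomposes as $X=A+iB$ with $A:=(X+X^\dagger)/2$ and $B:=(X-X^\dagger)/(2i)$ both in $\mathbb{H}$. Because $W$ is Hermitian, both $\tr[WA]$ and $\tr[WB]$ are real, so $\tr[WX]=\tr[WA]+i\,\tr[WB]$ is the canonical real/imaginary split. If $X\in M_W$, this forces $\tr[WA]=\tr[WB]=0$, so $A,B\in H_W$, whence $X=A+iB\in \mathrm{span}_{\mathbb{C}}(H_W)$.

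The main step is $H_W\subseteq \mathrm{span}_{\mathbb{C}}(S_W)$, and here I would exploit Lemma~\ref{lemma:FindPostive}. Since $W\in \mathbb{W}_x^c=\mathbb{W}_x\cap \Lambda_-$, $W$ carries a negative eigenvalue and is therefore not positive semidefinite; inspecting the explicit $Q$ built in the proof of Lemma~\ref{lemma:FindPostive} shows that the density matrix $\sigma:=Q/\tr[Q]$ is strictly positive definite (every eigenvector of $W$ appears with a strictly positive coefficient) and satisfies $\tr[W\sigma]=0$. Given an arbitrary $H\in H_W$, I would set $H':=H-\tr[H]\,\sigma$, which still lies in $H_W$ and additionally has $\tr[H']=0$. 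For all sufficiently small $t>0$, the perturbation $\rho_t:=\sigma+tH'$ remains positive definite (this is where strict positivity of $\sigma$ is used) and has unit trace, so $\rho_t\in S_W$. Rearranging gives $H=\tfrac{1}{t}(\rho_t-\sigma)+\tr[H]\,\sigma$, which exhibits $H$ as a real (and a fortiori complex) linear combination of elements of $S_W$.

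The only genuinely nontrivial ingredient is the existence of an interior point of $S_W$, i.e.\ a strictly positive density matrix lying in the trace-zero hyperplane of $W$; this is precisely what Lemma~\ref{lemma:FindPostive} delivers, and it is the step I expect to be the main obstacle were it not already available. Beyond that, the argument is a standard interior-point/perturbation trick, and it is uniform in $x\in\mathcal{X}$ because the assumption $W\in \mathbb{W}_x^c$ always produces a negative eigenvalue of $W$.
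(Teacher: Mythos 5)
Your proposal is correct and follows essentially the same route as the paper: both arguments use the Hermitian/anti\-hermitian decomposition $X=A+iB$ for $M_W\subseteq\mathrm{span}_{\mathbb{C}}(H_W)$, and both use the strictly positive state $\sigma$ with $\tr[W\sigma]=0$ supplied by Lemma~\ref{lemma:FindPostive} as an interior point, perturbing it by $H$ to land in $S_W$ (the paper forms $\epsilon H+(1-\epsilon)\sigma$ and renormalizes the trace afterwards, whereas you subtract $\tr[H]\,\sigma$ first --- a cosmetic difference). No gaps.
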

\begin{proof}
	Notice that each $W\in \mathbb{W}_x^c$  is not positive semidefinite. By Lemma \ref{lemma:FindPostive}, there is a positive state $\rho\in \mathbb{D}$ such that $\tr[W\rho]=0.$ So $\rho\in S_W$ by definition.
	
	Clearly, $S_W\subseteq H_W\subseteq M_W$ by definitions. Therefore, 
	$\mathrm{span}_{\mathbb{C}}(S_W)\subseteq  \mathrm{span}_{\mathbb{C}}(H_W)\subseteq M_W.$ First, we show that  $M_W\subseteq \mathrm{span}_{\mathbb{C}}(H_W).$ In fact, for any $X\in M_W,$ we have $\tr[WX]=0.$ Therefore,  $\tr[WX^\dagger]=\tr[X^\dagger W]=\tr[X^\dagger W^\dagger]=\tr[(WX)^\dagger]=\tr[WX ]^\dagger=0.$ Then we have $ \tr[W(X+X^\dagger)]=0$ and $\tr[W(iX-iX^\dagger)]=0.$ However, since $H_1=(X+X^\dagger)$ and $H_2=i(X-X^\dagger) \in\mathbb{H}$, they both belong to $H_W$. Notice that $X=(H_1-iH_2)/2$. Hence it belongs to $\mathrm{span}_{\mathbb{C}}(H_W).$
	
	Now we prove that $\mathrm{span}_{\mathbb{C}}(H_W)\subseteq  \mathrm{span}_{\mathbb{C}}(S_W).$ It suffices to show that $H_W\subseteq  \mathrm{span}_{\mathbb{C}}(S_W).$ For any $H\in H_W,$  we have  $H\in\mathbb{H}$ and  $\tr[WH]=0.$ We can find small enough $\epsilon \in(0,1)$ such that $P_\epsilon:=\epsilon H+(1-\epsilon) \rho$ is positive. Set $N=\tr(P_\epsilon).$ Then $\rho_\epsilon:=P_\epsilon/N=\frac{\epsilon}{N} H+\frac{1-\epsilon}{N}\rho\in \mathbb{D}.$  Moreover, we have the equality,
	$$ \tr[W \rho_\epsilon]= \frac{\epsilon}{N} \tr[W H]+\frac{1-\epsilon}{N} \tr[W\rho]=0.$$
	Therefore, $\rho_\epsilon\in S_W.$   Note that $H$ can be written as a linear combination of $\rho_\epsilon$ and $\rho.$ Hence, $H_W\subseteq  \mathrm{span}_{\mathbb{C}}(S_W)$,
	which completes the proof. \end{proof}

\smallskip
\section{Proof of Theorem \ref{thm:On_Completeness}}\label{appen:thm1}

\begin{thm}[Completeness]\label{thm:On_Completeness}
	For each  $x\in \mathcal{X},$ the coherence criterion ($\mathbb{W}_x, D_x,I_x$) is complete.
\end{thm}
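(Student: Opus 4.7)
\emph{Proof plan.} The plan is to give, for every coherent state $\rho\in\mathbb{D}\setminus\Delta$, an explicit witness $W\in\mathbb{W}_x$ satisfying $\tr[W\rho]\in D_x\setminus I_x$, by constructing a single ``base'' traceless witness from a nonzero off-diagonal matrix element of $\rho$ and then adapting it to each of the four possibilities for $x\in\mathcal{X}$. Because $\rho$ is coherent, there are indices $i\neq j$ with $\rho_{ij}\neq 0$; write $\rho_{ij}=re^{i\theta}$ with $r>0$, and set
$$
W_0:=-e^{i\theta}|i\rangle\langle j|-e^{-i\theta}|j\rangle\langle i|.
$$
A direct check shows that $W_0$ is Hermitian, has all diagonal entries equal to zero (so $W_0\in\mathbb{H}_\geq$), has trace zero, has eigenvalues $\pm 1$ on $\mathrm{span}\{|i\rangle,|j\rangle\}$ (so $W_0\in\Lambda_-$), and satisfies $\tr[W_0\rho]=-2r<0$. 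This single object is the engine of the whole argument.

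Next I would dispatch on $x$. If $x=0$, then $W_0\in\mathbb{W}_0$ and $\tr[W_0\rho]=-2r\in\mathbb{R}\setminus\{0\}=D_0\setminus I_0$, so $W_0$ itself detects $\rho$. If $x$ is $\geq$, then $\tr[W_0]=0\geq 0$ yields $W_0\in\mathbb{W}_\geq$ and $\tr[W_0\rho]<0\in D_\geq\setminus I_\geq$. If $x$ is $>$, pick any $\epsilon\in(0,2r)$ and set $W:=W_0+\epsilon\mathbb{I}_d$; then $\tr[W]=d\epsilon>0$, the diagonals equal $\epsilon\geq 0$, and $\tr[W\rho]=-2r+\epsilon<0$. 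Finally if $x\in(0,\infty)$, I would both scale and shift: choose $c>x/(2rd)$ and set $W:=cW_0+(x/d)\mathbb{I}_d$. Then $\tr[W]=x$, all diagonals equal $x/d\geq 0$, and $\tr[W\rho]=-2rc+x/d<0$, which lies in $D_x\setminus I_x=\mathbb{R}\setminus[0,x]$. (Alternatively, one may invoke the equivalence between the criteria $(\mathbb{W}_x,D_x,I_x)$ for different positive $x$ already established before the theorem, and reduce this subcase to any chosen representative.)

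In each case I have exhibited $W\in\mathbb{W}_x$ with $\rho\in\mathbb{C}_x[W]$, so $\mathbb{D}\setminus\Delta\subseteq\bigcup_{W\in\mathbb{W}_x}\mathbb{C}_x[W]$; the reverse containment is immediate, since any $W\in\mathbb{W}_x\subseteq\mathbb{H}_\geq$ satisfies $\tr[W\delta]\in I_x$ for every $\delta\in\Delta$ by the basic calculation in Section~\ref{sec:concept}. I do not anticipate a substantive obstacle here, as the argument is entirely constructive; the only point that requires a little care is the positive-real case, where the shift by $\frac{x}{d}\mathbb{I}_d$ must not return $\tr[W\rho]$ to the forbidden interval $[0,x]$, which is exactly why $W_0$ is first scaled by a sufficiently large factor $c$ so that its negative contribution dominates the shift.
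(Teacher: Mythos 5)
Your proof is correct and follows essentially the same strategy as the paper's: reduce everything to exhibiting a traceless witness with $\tr[W_0\rho]<0$, then shift by a multiple of $\mathbb{I}_d$ and rescale to handle the cases $x=\,>$ and $x\in(0,\infty)$. The only cosmetic difference is that you build the base witness directly from the phase of $\rho_{ij}$, whereas the paper splits into real and imaginary extractors $W^R_{j,k}$, $W^I_{j,k}$ and chooses a sign; both yield the same $\tr[W_0\rho]<0$ and the case analysis is identical.
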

\begin{proof} Suppose $\rho$ is a coherent state. Without loss of generality, we assume $\rho_{mn}=\langle m|\rho|n\rangle \neq 0$, where $m\neq n$. We need to show that there always exist some $W\in\mathbb{W}_x$ such that $\tr[W\rho]\in  D_x \setminus I_x.$ As $x\in \mathcal{X}$, the interval $(-\infty,0)\subseteq  D_x \setminus I_x$. It is sufficient to prove that there always exist some $W\in\mathbb{W}_x$ such that $\tr[W\rho]<0$.
	We prove the theorem according to following four cases:
	
	Case (i): $x=0$. We define
	$ W_{j,k}^R:=(|j\rangle\langle k|+|k\rangle\langle j|)/2$ and $W_{j,k}^I:=\sqrt{-1}(|j\rangle\langle k|-|k\rangle\langle j|)/2$
	for each $1\leq j<k\leq d.$ Note that we always have
	$\tr[W_{j,k}^R\rho]=\Re(\rho_{jk})$ and $\tr[W_{j,k}^I\rho]=\Im(\rho_{jk})$,
	where $\rho_{jk}=\langle j|\rho|k\rangle$, $\Re(z)=\alpha$ and $\Im(z)=\beta$ for any complex number $z=\alpha+\sqrt{-1} \beta \in \mathbb{C}.$ Clearly, $\mathcal{W}_0:=\{W_{j,k}^R,W_{j,k}^I\ \big |\  1\leq j<k\leq d\} \subseteq \mathbb{W}_0^c.$  As $\rho_{mn}\neq 0$,   either $\tr[W_{m,n}^R\rho]\neq 0$ or  $\tr[W_{m,n}^I\rho]\neq 0.$   That is, $\rho$ is in   $\mathbb{C}_0[W_{m,n}^R]$ or $\mathbb{C}_0[W_{m,n}^I].$  Therefore,
	\begin{equation}\label{eq:finite_0}
		\mathbb{D}\setminus \Delta=  \bigcup_{W\in \mathcal{W}_0} \mathbb{C}_0[W].
	\end{equation}
	Without loss of generality, we assume that $\tr[W_{m,n}^R\rho]\neq 0.$ Since $\tr[W_{m,n}^R\rho]$ is a real number, $\tr[W_{m,n}^R\rho]$ or $\tr[-W_{m,n}^R\rho]$ is negative and both $\pm W_{m,n}^R$ are in $\mathbb{W}_0.$  Hence, there always exist some $W\in\mathbb{W}_0$ such that $\tr[W\rho]<0.$
	
	Case (ii): $x$ is $\geq$. Denote $\mathcal{W}_\geq :=\{\pm W_{j,k}^R,\pm W_{j,k}^I\ \big |\  1\leq j<k\leq d\}$. By definition, we have  $\mathcal{W}_\geq\subseteq \mathbb{W}_\geq^c.$  Similar to the argument in above case, there always exist some $W\in \mathcal{W}_\geq$ such that $\tr[W\rho]<0.$ Therefore,
	\begin{equation}\label{eq:finite_geq}
		\mathbb{D}\setminus \Delta=  \bigcup_{W\in \mathcal{W}_\geq} \mathbb{C}_\geq[W].
	\end{equation}
	
	Case (iii): $x$ is $>.$ From the case (ii), there is some $W\in \mathcal{W}_\geq$ such that  $\tr[W\rho]<0.$ Note that the diagonals of $W$ are all zeros. For each $\epsilon>0$, set $W_\epsilon=W+\epsilon \mathbb{I}_d$.  Clearly, $W_\epsilon$ has positive diagonals and $\tr[W_\epsilon]=d \epsilon>0$. Hence, $W_\epsilon\in \mathbb{W}_>$.   Specially, choosing $\epsilon =-\frac{\tr[W\rho]}{2}$ we have
	$$\tr[W_\epsilon \rho]=\tr[W\rho]+\epsilon \tr[\rho]=\tr[W\rho]+\epsilon=\frac{\tr[W\rho]}{2}<0.$$
	
	Case (iv): $x$ is a positive real number. By the case (iii), there is some $W_\epsilon\in \mathcal{W}_>$ such that $\tr[W_\epsilon\rho]<0.$ Set $W_x=\frac{x}{\tr[W_\epsilon]}W_\epsilon.$ Clearly, $W_x\in \mathbb{W}_x$ and $$\tr[W_x\rho]=\frac{x}{\tr[W_\epsilon]} \tr[W_\epsilon \rho]<0.$$
	
	The above four cases together complete the proof.
\end{proof}

\section{Proof of Theorem \ref{thm:Characterization_Completeness}}\label{appen:thm2}
\begin{thm}[Characterization of finite completeness]\label{thm:Characterization_Completeness}
	Let $x$ be an element in $\mathcal{X}.$ The coherence criterion ($\mathbb{W}_x, D_x,I_x$) is   finitely completable if and only if   $x$ is  $0$ or  $\geq$.
\end{thm}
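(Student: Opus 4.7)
The plan is to establish the two directions of the equivalence separately. For the \emph{if} direction, namely that $x=0$ and $x=\geq$ yield finitely completable criteria, I would simply reuse the explicit finite collections $\mathcal{W}_0$ and $\mathcal{W}_\geq$ already exhibited in the proof of Theorem~\ref{thm:On_Completeness}. The covering identities \eqref{eq:finite_0} and \eqref{eq:finite_geq} state precisely that $\mathbb{D}\setminus\Delta = \bigcup_{W\in \mathcal{W}_0}\mathbb{C}_0[W] = \bigcup_{W\in \mathcal{W}_\geq}\mathbb{C}_\geq[W]$, and since both $\mathcal{W}_0$ and $\mathcal{W}_\geq$ have cardinality at most $2d(d-1)$, finite completeness follows immediately.

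For the \emph{only if} direction, namely that a criterion with $x\in(0,\infty)$ or $x$ being $>$ is not finitely completable, the idea is to exploit the maximally mixed state $\mathbb{I}_d/d$, since every admissible witness $W\in\mathbb{W}_x$ sends it strictly into the interior of the undetected range $I_x$. Given any finite subset $\{W_1,\ldots,W_n\}\subseteq \mathbb{W}_x^c$, I would consider the one-parameter family $\rho_\epsilon := (1-\epsilon)\mathbb{I}_d/d + \epsilon|+\rangle\langle +|$ with $|+\rangle := (|1\rangle + |2\rangle)/\sqrt{2}$; this is coherent for every $\epsilon\in(0,1]$ since its $(1,2)$-entry equals $\epsilon/2 \neq 0$. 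Linearity yields $\tr[W_i \rho_\epsilon] = (1-\epsilon)\tr[W_i]/d + \epsilon\langle +|W_i|+\rangle$. When $x\in(0,\infty)$ this tends to $x/d\in(0,x)$, strictly inside $I_x = [0,x]$, as $\epsilon\to 0$; when $x$ is $>$ it tends to $\tr[W_i]/d > 0$, strictly inside $I_> = [0,\infty)$. By continuity and the finiteness of the collection, there is a uniform $\epsilon_0 > 0$ such that $\tr[W_i\rho_\epsilon] \in I_x$ for every $i$ and every $\epsilon\in(0,\epsilon_0)$. Consequently, $\rho_\epsilon$ is a coherent state that escapes detection by the entire collection, so no finite subset of $\mathbb{W}_x^c$ can cover $\mathbb{D}\setminus\Delta$.

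I do not foresee a serious obstacle: the argument is essentially a short continuity/uniformity observation. The two minor points to verify are that $\rho_\epsilon$ is indeed coherent (clear from its off-diagonal entry) and that a single $\epsilon_0$ suffices for all $n$ witnesses (immediate since $\max_{i\leq n}|\langle +|W_i|+\rangle - \tr[W_i]/d|$ is finite). The underlying geometric reason the argument works is that $I_x$ has nonempty interior into which every admissible witness maps $\mathbb{I}_d/d$ strictly, whereas in the cases $x=0$ and $x=\geq$ this ``buffer zone'' is absent and the finite families of off-diagonal Hermitian matrices from Theorem~\ref{thm:On_Completeness} already detect every coherent state.
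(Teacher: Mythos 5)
Your proposal is correct and follows essentially the same route as the paper: the positive direction reuses the finite covering families $\mathcal{W}_0$ and $\mathcal{W}_\geq$ from the proof of Theorem~\ref{thm:On_Completeness}, and the negative direction perturbs the maximally mixed state in a coherent direction (the paper uses $\rho_\epsilon=\mathbb{I}_d/d+\epsilon(|1\rangle\langle 2|+|2\rangle\langle 1|)$, you use the convex mixture with $|+\rangle\langle+|$, which is the same idea and if anything slightly cleaner since positivity is automatic) and then chooses $\epsilon$ uniformly small so that $\tr[W_i\rho_\epsilon]$ stays in the interior of $I_x$ for all finitely many witnesses.
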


\begin{proof} By Eqs. \eqref{eq:finite_0} and \eqref{eq:finite_geq} and the definition of finitely completable, both coherence criteria ($\mathbb{W}_0, D_0,I_0$) and  ($\mathbb{W}_\geq, D_\geq,I_\geq$) are finitely completable.
	
	Now we show that for any positive real number $x$, the criterion ($\mathbb{W}_x, D_x,I_x$) is   finitely incompletable. If not, there exists some $x>0$ and a set $\mathcal{W}:=\{W_i\}_{i=1}^n\subseteq \mathbb{W}_x^c$ such that
	$\mathbb{D}\setminus \Delta=\cup_{i=1}^n \mathbb{C}_x[W_i]$. That is, for any coherent states $\rho$, there always exists some $W\in \mathcal{W}$ such that
	$\tr[W\rho]<0$ or  $\tr[W\rho]>x.$
	On the other hand, set $\rho_\epsilon=\frac{\mathbb{I}_d}{d}+ \epsilon H$, where $H=|1\rangle\langle 2| +|2\rangle \langle 1|$ and $\epsilon \in (0,\frac{1}{d})$, and $M=\max\{\big|\tr[ (W-\Delta(W) H]\big|: W\in\mathcal{W}\}+1.$ Note that
	$$\tr[W\rho_\epsilon]=\frac{x}{d}+\epsilon \tr[(W-\Delta(W))H].$$
	If we set $\epsilon =\min\{ \frac{x}{2dM},\frac{1}{2d}\},$ then we have $0<\frac{1}{2d}x\leq  \tr[W\rho_\epsilon]\leq \frac{3}{2d}x<x$ for every $W\in \mathcal{W}.$ However, $\rho_\epsilon$ is a coherent state which can not be detected by all the witnesses in $\mathcal{W}.$
	
	Now we show that the criterion ($\mathbb{W}_>, D_>,I_>$) is also finitely incompletable. For any finite set $\mathcal{W}_>:=\{W_i\}_{i=1}^n\subseteq \mathbb{W}_>^c$, we define  $M_1:=\max\{\big|\tr[ (W-\Delta(W) H]\big|: W\in\mathcal{W}\}+1, $ and $m_1:=\min\{\tr[W]: W\in \mathcal{W}_>\}>0.$ If we set $\epsilon_1=\min\{\frac{m_1}{2dM_1},\frac{1}{2d}\},$ then $\tr[W\rho_{\epsilon_1}]\geq \frac{m_1}{2d}>0$ for every $W\in \mathcal{W}_>.$ That is,  $\rho_{\epsilon_1}$ is a coherent state whose coherence can not be detected by the witnesses in $\mathcal{W}_>.$
\end{proof}

\section{Proof of Theorem \ref{thm:Common_State}}\label{appen:thm3}
\begin{thm}[Property of finite intersection]\label{thm:Common_State}
	Let $x\in \mathcal{X}$ and $\mathcal{W}:=\{W_i\}_{i=1}^n\subseteq \mathbb{W}_x^c.$  Then   the following statements hold.
	\begin{enumerate}
		\item [{\rm(a)}]For the case $x\in (0,\infty)$ and any subset $\mathcal{W}'$ with cardinality $|\mathcal{W}'|\geq \lceil\frac{n}{2}\rceil$, if there always exist $t_W >0$ for each $W\in \mathcal{W}'$ and $\sum_{W\in \mathcal{W}'} t_W=1$ such that $  \sum_{W\in \mathcal{W}'}  t_W W \geq \boldsymbol{0}$, then  $ \bigcap_{i=1}^n \mathbb{C}_x[{W_i}] = \emptyset.$
		\item [{\rm(b)}] If $x$ is $>$ or $\geq,$  then
		$ \bigcap_{i=1}^n \mathbb{C}_x[{W_i}] = \emptyset
		$ if and only if there is some nonnegative  $t_i\geq 0$ and $\sum_{i=1}^n t_i=1$ such that $\sum_{i=1}^n t_iW_i$ is positive semidefinite.
		\item [\rm(c)] If $x=0,$ there exists a common coherent state that can be detected by all of $W_i$ via criterion of type (C). That is, we always have
		$$
		\mathcal{C}_0[{\mathcal{W}}]:=\bigcap_{i=1}^n \mathbb{C}_0[{W_i}] \neq \emptyset.
		$$
		Moreover, the cardinality of $\mathcal{C}_0[{\mathcal{W}}]$ is infinity. That is, given any set of finitely many coherent witnesses,  they     share with infinitely many common coherent states.
	\end{enumerate}
\end{thm}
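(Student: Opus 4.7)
My plan is to treat the three parts separately, relying on a pigeonhole argument for (a), a Hahn--Banach separation for (b), and a Baire/dimension argument for (c). For part (a), I would argue by contradiction: suppose $\rho \in \bigcap_{i=1}^n \mathbb{C}_x[W_i]$. Then each $i$ satisfies either $\tr[W_i\rho] < 0$ or $\tr[W_i\rho] > x$, so pigeonhole forces one of these two index sets to have cardinality at least $\lceil n/2 \rceil$. Applying the hypothesis to the corresponding subfamily $\mathcal{W}'$ yields a convex combination $A := \sum_{W \in \mathcal{W}'} t_W W \geq \boldsymbol{0}$. In the ``$<0$'' case, $\tr[A\rho] = \sum_W t_W \tr[W\rho] < 0$ contradicts positivity of $A$. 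In the ``$>x$'' case, $\tr[A] = x$ (each $W_i$ has trace $x$ and the weights sum to one), and $A \geq \boldsymbol{0}$ gives $\tr[A\rho] \leq \lambda_{\max}(A) \leq \tr[A] = x$, contradicting $\tr[A\rho] > x$.

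For part (b), one direction is immediate: if $\sum_i t_i W_i \geq \boldsymbol{0}$ with $t_i \geq 0$ and $\sum_i t_i = 1$, then $\sum_i t_i \tr[W_i\rho] \geq 0$ for every $\rho \in \mathbb{D}$, preventing all $\tr[W_i\rho]$ from being strictly negative simultaneously. For the converse I would apply a separation argument to the compact convex set $K := \{(\tr[W_1\rho], \ldots, \tr[W_n\rho]) : \rho \in \mathbb{D}\} \subset \mathbb{R}^n$. The emptiness hypothesis says $K$ is disjoint from the open negative orthant $\{y \in \mathbb{R}^n : y_i < 0 \text{ for all } i\}$. Hahn--Banach then furnishes a nonzero $(t_1, \ldots, t_n)$ and constant $c$ with $\sum_i t_i y_i \geq c$ on $K$ and $\sum_i t_i y_i \leq c$ on the negative orthant; unboundedness of the latter forces $t_i \geq 0$ and $c \geq 0$. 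Rescaling so that $\sum_i t_i = 1$, the inequality $\sum_i t_i \tr[W_i\rho] \geq 0$ for all $\rho \in \mathbb{D}$ is precisely $\sum_i t_i W_i \geq \boldsymbol{0}$.

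For part (c), I would observe that each $S_i := \{\rho \in \mathbb{D} : \tr[W_i\rho] = 0\}$ is a closed subset of $\mathbb{D}$ lying inside the affine hyperplane $\{X \in \mathbb{H} : \tr[W_i X] = 0\}$. Since $W_i \in \mathbb{W}_0^c$ supplies a state outside $S_i$, this is a proper affine hyperplane inside the $(d^2-1)$-dimensional affine span of $\mathbb{D}$, so $S_i$ has empty interior in $\mathbb{D}$. A finite union of nowhere-dense closed sets is nowhere dense, so $\bigcap_{i=1}^n \mathbb{C}_0[W_i] = \mathbb{D} \setminus \bigcup_i S_i$ is open and dense in $\mathbb{D}$, and hence uncountable once $d \geq 2$.

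The main obstacle is the converse direction of part (b): the separation must produce a normal vector with nonnegative coordinates, which requires careful use of the unboundedness of the negative orthant rather than merely the nonzeroness of the separating functional. The $S_+$ analysis in part (a) also hinges on the inequality $\tr[A\rho] \leq \tr[A]$ for $A \geq \boldsymbol{0}$ and $\rho \in \mathbb{D}$; without that (easy but essential) observation the ``$> x$'' half of the pigeonhole argument would not close.
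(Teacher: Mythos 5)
Your proposal is correct, and while part (a) follows the paper's route, parts (b) and (c) take genuinely different ones. For (a) you use the same pigeonhole on the index sets $\{i:\tr[W_i\rho]<0\}$ and $\{i:\tr[W_i\rho]>x\}$ and the same key inequality $\tr[A\rho]\le\lambda_{\max}(A)\le\tr[A]=x$ for a PSD convex combination $A$ of trace-$x$ witnesses; you treat both halves of the dichotomy explicitly where the paper dismisses one ``without loss of generality,'' which is if anything more careful. For (b) the paper does not prove the statement at all---it cites Wang et al.\ for the case $x=\,\geq$ and notes $x=\,>$ is identical---whereas your Hahn--Banach separation of the compact convex joint numerical range $K=\{(\tr[W_1\rho],\dots,\tr[W_n\rho]):\rho\in\mathbb{D}\}$ from the open negative orthant is a complete, self-contained argument, and you correctly identify that the nonnegativity of the separating normal comes from the unboundedness of the orthant. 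For (c) the paper argues by induction on $n$, forming mixtures $\lambda\rho+(1-\lambda)\sigma$ and invoking the finiteness of the zero set of a nontrivial polynomial in $\lambda$, followed by a separate contradiction argument for infinitude; your observation that each $S_i=\{\rho\in\mathbb{D}:\tr[W_i\rho]=0\}$ is a closed, nowhere dense slice of $\mathbb{D}$ by a proper affine hyperplane (proper because a traceless nonzero $W_i$ is not a multiple of $\mathbb{I}_d$, so the functional is nonconstant on the trace-one affine span), so that $\bigcap_i\mathbb{C}_0[W_i]=\mathbb{D}\setminus\bigcup_iS_i$ is open, dense and uncountable, delivers nonemptiness and infinitude in one stroke and is shorter. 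The only step worth spelling out there is that a nonempty relatively open subset of $\mathbb{D}$ cannot be contained in a proper affine subspace of $\{X\in\mathbb{H}:\tr X=1\}$; this holds because $\mathbb{D}$ is a full-dimensional convex body in that affine span, so every relatively open subset contains an open subset of the span.
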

\begin{proof}
	(a). If $\rho\in \bigcap_{i=1}^n \mathbb{C}_x[{W_i}] $, then $\tr[W_i \rho]<0$ or $\tr[W_i \rho]>x.$ Let $\mathcal{I}_0:=\{i \ \big |\  \tr[W_i\rho]<0, \ 1\leq i\leq n\}$ and  $\mathcal{I}_x:=\{i \ \big |\  \tr[W_i\rho]>x, \ 1\leq i\leq n\}.$
	Without loss of generality, we assume the set $\mathcal{I}_x$ is larger than $\mathcal{I}_0$. Then correspondingly the cardinality must be greater or equal to $\lceil \frac{n}{2}\rceil.$  Define $\mathcal{W}':=\{W_i|i\in \mathcal{I}_x\}.$ By assumption, there  are  $t_i>0 $ for $i\in\mathcal{I}_x$  such that $\sum_{i\in \mathcal{I}_x} t_i=1$ and  $ W:=\sum_{i\in \mathcal{I}_x} t_i W_i\geq \boldsymbol{0}.$ Therefore, all eigenvalues of $W$ are nonnegative. Moreover, as $\tr[W]=x$, all the eigenvalues of $W$ are less or equal to $x$. Hence, $W\leq x\mathbb{I}_d$, which implies $\tr[W\rho]\leq x.$ On the other hand, for each $i\in \mathcal{I}_x,$ we have $\tr[W_i\rho]>x$, which leads to
	$$
	\tr[W\rho]=\sum_{i\in\mathcal{I}_x} t_i \tr[W_i\rho]>x \sum_{i\in\mathcal{I}_x} t_i =x,
	$$
	and thus a contradiction. Hence $ \bigcap_{i=1}^n \mathbb{C}_x[{W_i}]=\emptyset.$
	
	\noindent (b). The argument for the case that $x$ is $\geq$ can be directly referred to reference \cite{Wang21}. When $x$ is $>$, as the $D_>$ and $I_>$ are the same as $D_\geq$ and $I_\geq$, the proof goes similarly to the one given in \cite{Wang21}.
	
	\noindent (c). We prove the statement by induction on $n.$ First, we consider the case $n=2.$ As both $\mathbb{C}_0[{W_1}]$ and  $\mathbb{C}_0[{W_2}]$  are nonempty, we assume $\rho \in \mathbb{C}_0[{W_1}]$ and  $\sigma\in \mathbb{C}_0[{W_2}].$ We may assume that  $\rho,\sigma\notin \mathbb{C}_0[{W_1}]\cap \mathbb{C}_0[{W_2}]$, i.e.,  $\rho\notin \mathbb{C}_0[{W_2}]$  and  $\sigma\notin \mathbb{C}_0[{W_1}].$  Equivalently, we have $\tr[W_2\rho]=\tr[W_1\sigma]=0.$ Then for any $\lambda\in (0,1),$ the matrix $\pi_{\lambda} :=\lambda\rho+(1-\lambda)\sigma\in \mathbb{D}.$ Moreover,
	$\tr[W_1\pi_\lambda]=\lambda\tr[W_1\rho]\neq 0$ and $\tr[W_2\pi_\lambda]=(1-\lambda)\tr[W_2\sigma]\neq 0$.
	That is, for each $\lambda\in (0,1),$ the state $\pi_{\lambda}$ belongs to  $\mathbb{C}_0[{W_1}]\cap \mathbb{C}_0[{W_2}]$.
	Now suppose that the statement holds when cardinality of the set of coherence witnesses is equal to $n-1.$ Let $\mathcal{W}=\{W_1,W_2,\cdots,W_n\}.$ We define $\mathcal{W}_1=\{W_1, \cdots,W_{n-1}\}$ and  $\mathcal{W}_2=\{ W_2,\cdots,W_{n}\}.$ By induction,
	the following two sets
	$$\mathcal{C}_0[{\mathcal{W}_1}]:=\bigcap_{W\in\mathcal{W}_1} \mathbb{C}_0[{W}],\ \ \mathcal{C}_0[{\mathcal{W}_2}]:=\bigcap_{W\in\mathcal{W}_2} \mathbb{C}_0[{W}],$$
	are nonempty. Suppose $\rho\in \mathcal{C}_0[{\mathcal{W}_1}]$ and  $\sigma\in\mathcal{C}_0[{\mathcal{W}_2}],$ i.e., $\tr[W_i \ \rho]\neq 0$ and  $\tr[W_{i+1}\sigma]\neq 0$  for all $i=1,2,\cdots,(n-1).$ If $\tr[W_{n}\rho]\neq 0$ (or $\tr[W_{1}\sigma]\neq 0$), we have $\rho\in \mathcal{C}_0[{\mathcal{W}}]$ (or $\sigma\in \mathcal{C}_0[{\mathcal{W}}]$) which proves the nonempty of $\mathcal{C}_0[{\mathcal{W}}]$. Therefore, we might assume $\tr[W_{n}\rho]=0$  and  $\tr[W_{1}\sigma]= 0$. Similarly, for each $\lambda\in(0,1),$ we define $\pi_{\lambda} :=\lambda\rho+(1-\lambda)\sigma\in \mathbb{D}.$ Then we have $ \tr[W_1\pi_\lambda]=\lambda\tr[W_1\rho]\neq 0$ and $\tr[W_n\pi_\lambda]=(1-\lambda)\tr[W_n\sigma]\neq 0.$
	Moreover, for each integer $i\in[2,n-1],$ we have $a_i:=\tr[W_{i}\rho]\neq 0$ and  $b_i:=\tr[W_{i}\sigma]\neq 0.$ Define
	$f_i(\lambda):=\tr[W_i\pi_\lambda]=\lambda a_i+(1-\lambda) b_i=(a_i-b_i)\lambda+b_i$
	and $f(\lambda):=\prod_{i=2}^{n-1} f_i(\lambda).$ If $a_i=b_i$ for all $i\in[2,n-1],$ we have $\tr[W_i\pi_\lambda]=b_i\neq 0.$ Then $\pi_\lambda\in\mathcal{C}_0[{\mathcal{W}}].$ If not, $f(\lambda)$ is a nontrivial (not a constant) polynomial of $\lambda.$ Therefore, there are only finite $\lambda$'s, say $\lambda_j$ $(j=1,2,\cdots,N)$ such that $f(\lambda_j)=0.$ Therefore, for each $\lambda\in (0,1)\setminus \{\lambda_j\}_{j=1}^N,$ we always have  $f(\lambda)\neq 0.$ Hence $\tr[W_i\pi_\lambda]=f_i(\lambda)\neq 0$ for all integer $i\in[2,n-1]$, which also holds for $i=1,n.$ Therefore, they all belong to the set $\mathcal{C}_0[{\mathcal{W}}].$
	
	For each coherent state $\rho$, there exists witness $W_\rho$ which can not detect the coherence of $\rho$, i.e., $\rho\notin \mathbb{C}_0[{W_{\rho}}].$ If $\mathcal{C}_0[{\mathcal{W}}]$ is finite, suppose $\mathcal{C}_0[{\mathcal{W}}]=\{\rho_j\}_{j=1}^m$. Then one has
	$$
	(\cap_{i=1}^n \mathbb{C}_0[{W_i}]) \bigcap   (\cap_{j=1}^m \mathbb{C}_0[{W_{\rho_j}}])= \emptyset,
	$$
	which leads to a contradiction.
\end{proof}

\section{ Proof of Theorem \ref{thm:inclusion_State}}\label{appen:thm4}
\begin{thm}[Property of inclusion or identity]\label{thm:inclusion_State}
	Let $x\in \mathcal{X}$ and $W_1, W_2\in \mathbb{W}_x^c.$ The following statements hold.
	\begin{enumerate}
		\item [{\rm(a)}] If $x\in (0,\infty),$ then $\mathbb{C}_x[{W_1}]=\mathbb{C}_x[{W_2}]$ if and only if $W_1=W_2$ or $W_1+W_2=x \mathbb{I}_d$ (the statement is true only if $d=2$).
		\item [{\rm(b)}]  If  $x$ is $>$ or $\geq$, $\mathbb{C}_x[{W_1}]=   \mathbb{C}_x[{W_2}]$ if and only if there exists $r\in\mathbb{R}_+:=\{r\in\mathbb{R} \ \big |\ r>0\}$ such that  $W_2= r W_1.$  Moreover,  $\mathbb{C}_x[{W_2}]\subseteq    \mathbb{C}_x[{W_1}]$  if and only if  there exists $a\in \mathbb{R}_+$ and a positive semidefintie operator $P$ such that $W_2= a W_1+P$.
		\item [\rm(c)] If $x=0,$ then $\mathbb{C}_0[W_1]=\mathbb{C}_0[W_2] $ if and only if there exist some $r\in \mathbb{R}\setminus\{0\}$ such that $W_2=rW_1$ (we call such two witnesses $\mathbb{R}$ equivalent). If $W_1$ and $W_2$ are not $\mathbb{R}$ equivalent, then $\mathbb{C}_0[W_1]\not\subseteq \mathbb{C}_0[W_2]$ and $\mathbb{C}_0[W_2]\not\subseteq \mathbb{C}_0[W_1].$
	\end{enumerate}
\end{thm}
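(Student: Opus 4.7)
The proof splits naturally along the three cases, with (b) the load-bearing statement and (a), (c) essentially corollaries. I begin with part (b), where $x$ is the symbol $\geq$ or $>$. The equality $\mathbb{C}_x[W_1] = \mathbb{C}_x[W_2]$ forces the two open slabs $\{\rho \in \mathbb{D}: \tr[W_i\rho] < 0\}$ to coincide, so that their common topological boundary in the affine hull of $\mathbb{D}$ is the set $S_{W_1} = S_{W_2}$ of Lemma \ref{lemma:spaceEqual}. The lemma upgrades this to $M_{W_1} = M_{W_2}$ as codimension-one subspaces of $\mathrm{Mat}_d(\mathbb{C})$, hence $W_2 = r W_1$ for some real $r$; checking the sign of $\tr[W_i \rho]$ at any $\rho \in \mathbb{C}_\geq[W_1]$ forces $r > 0$. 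For the inclusion $\mathbb{C}_x[W_2] \subseteq \mathbb{C}_x[W_1]$, the contrapositive gives $\tr[W_1 \rho] \geq 0 \Rightarrow \tr[W_2 \rho] \geq 0$ for all $\rho \in \mathbb{D}$; normalizing by the trace extends this implication to every positive semidefinite $\sigma$. A Farkas-type conic-duality argument then places $W_2$ in the closed convex cone generated by $W_1$ together with the positive semidefinite cone, i.e., $W_2 = a W_1 + P$ with $a \geq 0$ and $P \geq \boldsymbol{0}$; the requirement $W_2 \in \Lambda_-$ rules out $a = 0$, yielding $a > 0$.

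Part (a) is then deduced by comparing the two-piece decompositions $\mathbb{C}_x[W_i] = \mathbb{C}_\geq[W_i] \cup \mathbb{C}_\geq[x\mathbb{I}_d - W_i]$ already noted in the paper. Because $W_i \in \mathbb{W}_x^c$, the first piece is always nonempty, while the second may or may not be; exploiting the convexity (hence connectedness) of each open slab, any identity $\mathbb{C}_x[W_1] = \mathbb{C}_x[W_2]$ must match the pieces either straight or crossed. In the straight matching, part (b) supplies $W_2 = r W_1$, and the common trace $x$ forces $r = 1$, so $W_1 = W_2$. In the crossed matching, part (b) applied to each equality gives $x\mathbb{I}_d - W_2 = r W_1$ and $W_2 = s(x\mathbb{I}_d - W_1)$ with $r, s > 0$. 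Eliminating $W_2$ and comparing the coefficients of $\mathbb{I}_d$ and $W_1$---which are linearly independent, since $W_1 \in \Lambda_-$ cannot be a positive multiple of $\mathbb{I}_d$---yields $r = s = 1$, so $W_1 + W_2 = x\mathbb{I}_d$. The trace identity $\tr[W_1] + \tr[W_2] = dx$ then becomes $2x = dx$, forcing $d = 2$; this is exactly the parenthetical caveat.

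For part (c), $x = 0$ so $\mathbb{C}_0[W] = \{\rho \in \mathbb{D}: \tr[W\rho] \neq 0\}$, and the equality $\mathbb{C}_0[W_1] = \mathbb{C}_0[W_2]$ reduces to $S_{W_1} = S_{W_2}$. Lemma \ref{lemma:spaceEqual} again upgrades this to $M_{W_1} = M_{W_2}$, giving $W_2 = r W_1$ for some $r \in \mathbb{R} \setminus \{0\}$; no positivity on $r$ is required, because the condition $\tr[W\rho] \neq 0$ is insensitive to the overall sign of $W$. For the non-inclusion statement, $\mathbb{C}_0[W_1] \subseteq \mathbb{C}_0[W_2]$ implies $S_{W_2} \subseteq S_{W_1}$, hence $\mathrm{span}(S_{W_2}) \subseteq \mathrm{span}(S_{W_1})$; since Lemma \ref{lemma:spaceEqual} identifies both spans with hyperplanes of $\mathrm{Mat}_d(\mathbb{C})$, the inclusion must in fact be an equality, making $W_1$ and $W_2$ $\mathbb{R}$-equivalent; the symmetric inclusion is handled identically. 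The main obstacle in the whole argument is the Farkas-type direction in part (b): the cone $\{a W_1 + P : a \geq 0,\ P \geq \boldsymbol{0}\}$ must be shown to be closed so that duality recovers it from its dual, and the inclusion hypothesis must be extended from $\mathbb{D}$ to the full positive semidefinite cone before duality is invoked. Both are standard facts of convex analysis but deserve explicit verification, since unbounded-cone versions of Farkas can fail without such closedness.
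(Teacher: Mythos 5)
Your proposal is correct, and parts (a), (c), and the first half of (b) track the paper's own argument: you recover $S_{W_1}=S_{W_2}$ from the equality of the sets $\{\rho:\tr[W_i\rho]<0\}$ (your topological-boundary phrasing is a cleaner packaging of the paper's perturbation-by-$\epsilon\sigma$ contradiction, and it is valid since every point of $S_W$ is a limit of detected states), then invoke Lemma \ref{lemma:spaceEqual} to get $M_{W_1}=M_{W_2}$ and hence proportionality; your piece-matching in (a) and the trace bookkeeping forcing $d=2$ is the same as the paper's. The genuine divergence is the ``only if'' direction of the inclusion statement in (b). The paper normalizes by the traces, imports Lemma 1 and Corollary 1 of Ref.~\cite{Wang21b} for the trace-one case, and then runs a four-case analysis on whether $\tr[W_1]$, $\tr[W_2]$ vanish, including a limiting argument $W_2+\epsilon\mathbb{I}_d=a_\epsilon W_1+P_\epsilon$ with $\epsilon\to0$. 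You instead extend the implication $\tr[W_1\sigma]\geq0\Rightarrow\tr[W_2\sigma]\geq0$ to the full positive semidefinite cone and apply bipolar duality to the cone $K=\mathbb{R}_{\geq0}W_1+\mathrm{PSD}$; this is self-contained, avoids the external citation, and handles all trace cases uniformly. The closedness of $K$ that you rightly flag does hold and should be written out: since $W_1\in\mathbb{H}_\geq\setminus\{\mathbf{0}\}$, no positive multiple of $W_1$ is negative semidefinite (a negative semidefinite matrix with nonnegative diagonal is zero), so $\mathbb{R}_{\geq0}W_1\cap(-\mathrm{PSD})=\{0\}$ and the sum of the two closed cones is closed; then $K^*=\{\sigma\succeq\boldsymbol{0}:\tr[W_1\sigma]\geq0\}$ and your hypothesis is exactly $W_2\in K^{**}=K$, with $a>0$ forced by $W_2\in\Lambda_-$.

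Two minor omissions, neither fatal: you do not record the easy converse in (a) (for $d=2$ and $W_1+W_2=x\mathbb{I}_2$ one has $\tr[W_1\rho]+\tr[W_2\rho]=x$, so the detection conditions swap and the detected sets coincide), and the claim that the identity of the two-piece unions must match pieces ``straight or crossed'' deserves one sentence — each piece is relatively open and convex, hence connected, so it lands entirely inside one piece of the other decomposition — which you gesture at but should state; the paper is equally terse here.
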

\begin{proof}
	We first prove the statements (b) and   (c).
	
	(b). For any Hermitian matrix $W\in \mathbb{H}$, define $S_W:=\{\rho\in \mathbb{D} | \ \tr[W\rho]=0\}$ and $M_W:=\{ X\in \mathrm{Mat}_{d}(\mathbb{C}) | \ \tr[W X]=0\}.$  Suppose that $\mathbb{C}_x[W_1]=\mathbb{C}_x[W_2].$ We claim that $S_{W_1}=S_{W_2}.$ Otherwise, without loss of generality, we assume that $\rho\in S_{W_1}$ and $\rho\notin S_{W_2}$, i.e., $\tr[W_1\rho]=0$ and $\tr[W_2\rho]\neq 0.$ If $\tr[W_2\rho]< 0,$ then $\rho\in  \mathbb{C}_x[W_2]$ and $\rho\notin  \mathbb{C}_x[W_1]$ which contradicts to the assumption. Therefore, $\tr[W_2\rho]> 0.$  Now choosing any $\sigma \in \mathbb{C}_x[W_1],$ we have $\tr[W_1\sigma]<0$ and $\tr[W_2\sigma]<0.$ For each $\epsilon \in(0,1),$ define $\rho_\epsilon = \epsilon \sigma+(1-\epsilon)\rho\in \mathbb{D}.$ Then for small enough $\epsilon$, we have $\tr[W_1\rho_\epsilon]<0$ but  $\tr[W_2\rho_\epsilon]>0$, which implies that $\rho_\epsilon \in \mathbb{C}_x[W_1]$ but $\rho_\epsilon \notin \mathbb{C}_x[W_2].$ Hence we obtain again a contradiction. Therefore, $S_{W_1}=S_{W_2}.$  By Lemma \ref{lemma:spaceEqual}, $M_{W_1}=M_{W_2}$ implies $W_2=cW_1$ for some nonzero complex number. With this at hand, it is  easy to show that $c$ is in fact a positive real number. The other direction of the first statement in (b) is obvious.
	
	Now we prove the second statement. Suppose that there are $a\in \mathbb{R}_+$  and  a  positive semidefintie operator $P$ such that $W_2=a W_1+P.$  For any $\rho\in \mathbb{C}_x[W_2],$ we have $0>\tr[W_2 \rho] =a\tr[W_1\rho]+\tr[P\rho].$ This implies that  $\tr[W_1\rho]<0$ as we always have $\tr[P\rho]\geq 0$ for positive semidefinite $P.$ Therefore, $\rho\in \mathbb{C}_x[W_1]$ and $\mathbb{C}_x[W_2]\subseteq \mathbb{C}_x[W_1].$
	
	Suppose that $\mathbb{C}_x[W_2]\subseteq \mathbb{C}_x[W_1].$
	We prove that there exist $a\in\mathbb{R}_+$ and a positive semidefinite operator $P$ such that $W_2=aW_1+P.$ Set $t_1=\tr[W_1]$ and $t_2=\tr[W_2].$  Clearly, $t_1,t_2\geq 0$ as $W_1,W_2\in\mathbb{H}_\geq $. We prove the conclusion according to the following four cases.
	\begin{enumerate}
		\item[\rm{(i)}] Both $t_1$ and $ t_2$ are positive. Note that $\mathbb{C}_x[W_1]=\mathbb{C}_x[W_1/{t_1}]$ and $\mathbb{C}_x[W_2]=\mathbb{C}_x[W_2/{t_2}].$ Following the proof of Lemma 1 and the Corollary 1 in Ref. \cite{Wang21b} (which is correct under the assumption that the traces of the observables are 1), we have
		$ W_2/{t_2}=a_1 W_1/{t_1}+Q$ for some positive $a_1$ and positive semidefinite $Q.$ Therefore, $W_2= aW_1+P$, where $a=a_1{t_2}/{t_1}$ and $P={t_2}Q$.
		\item[\rm{(ii)}] $t_1=0$ but $t_2>0.$ It is easily verified that
		$\mathbb{C}_x[W_2]\subseteq   \mathbb{C}_x[W_1+W_2]\subseteq  \mathbb{C}_x[W_1].$ Similarly, we have
		\begin{equation}\label{eq:plusQ}
			W_2/{t_2}=a_1 (W_1+W_2)/{t_2}+Q
		\end{equation}
		for some positive $a_1$ and  positive semidefinite $Q.$ From the trace of Eq. \eqref{eq:plusQ}, we obtain $1=a_1+\tr[Q].$ Therefore, $0<a_1\leq 1.$ Clearly, $a_1\neq 1,$ otherwise, $\tr[Q]=0$ implies that $Q=\mathbf{0}$ and $W_1=-t_2/a_1Q=\mathbf{0}$ is not a nontrivial coherent witness. Hence, $0<a_1< 1.$ Eq. \eqref{eq:plusQ} can be reexpressed as
		$$
		W_2=\frac{a_1}{1-a_1} W_1+\frac{t_2}{1-a_1} Q.
		$$
		\item[\rm{(iii)}] $t_1> 0$ but $t_2= 0.$ One easily shows that
		$\mathbb{C}_x[W_2]\subseteq   \mathbb{C}_x[W_1+W_2]\subseteq  \mathbb{C}_x[W_1].$ Similarly, one has
		\begin{equation}\label{eq:plusQ2}
			(W_1+W_2)/{t_1}=a_1 W_1/{t_1}+Q
		\end{equation} for some positive $a_1$ and positive semidefinite $Q.$ The trace of Eq. \eqref{eq:plusQ2} gives rise to $1=a_1+\tr[Q].$ Therefore, $0<a_1\leq 1.$ Clearly, $a_1\neq 1,$ otherwise, $W_2=t_1Q=\mathbf{0}$ can not be a nontrivial coherent witness. We can rewrite Eq. \eqref{eq:plusQ2} as $(1-a_1)W_1+W_2 =t_1 Q$ and obtain that
		$\mathbb{C}_x[W_2]\subseteq   \mathbb{C}_x[(1-a_1)W_1+W_2]\subseteq  \mathbb{C}_x[W_1]$. On the other hand, $\mathbb{C}_x[ t_1Q]=\emptyset$. Therefore, this case can not happen as $\mathbb{C}_x[W_2]\neq \emptyset.$
		
		\item[\rm{(iv)}] $t_1=t_2=0.$ For this case, we show that this could happen only when $W_2=aW_1.$ In fact, for small enough $\epsilon >0,$
		$\mathbb{C}_x[W_2+\epsilon \mathbb{I}_d]\subseteq   \mathbb{C}_x[W_2]\subseteq  \mathbb{C}_x[W_1].$ By the argument of case (ii), we have $a_\epsilon>0$ and  positive semidefinite operator $P_\epsilon$ such that $W_2+\epsilon \mathbb{I}_d= a_\epsilon W_1+P_\epsilon$, which implies that the diagonals of $P_\epsilon$ are all $\epsilon$.  Moreover, as $P_\epsilon$ is positive semidefinite, the  module of each offdiagonal elements must be smaller than $
		\epsilon$. The matrix $P_\epsilon$ converges to zero matrix (up to entries wise) when $\epsilon \rightarrow 0$. Taking the limit $\epsilon \rightarrow 0$, we have  $W_2=aW_1.$
	\end{enumerate}
	
	(c). If $W_2=rW_1$ for $r\neq 0,$  we always have  $\tr[W_2\rho]=r\tr[W_1\rho]$ for all $\rho\in\mathbb{D}.$ In this case, $\tr[W_2\rho]=0$ if and only if $\tr[W_1\rho]=0$. Hence $\mathbb{C}_0[W_2]=\mathbb{C}_0[W_1].$ Conversely, if $\mathbb{C}_0[W_1]=\mathbb{C}_0[W_2]$,  we have equivalently $S_{W_1}=S_{W_2}.$ By Lemma \ref{lemma:spaceEqual}, we get $M_{W_1}=M_{W_2}.$ By definition, $M_{W_1} $ is just the orthogonal complement space to the vector $W_1$. Hence the space has dimensional $d^2-1.$ So the vectors that are orthogonal to each elements in $M_{W_1}$ are just $\mathbb{C} W_1.$ Therefore, there is nonzero $c\in\mathbb{C}$ such that $W_2=cW_1.$ Since $W_2$ is Hermitian, we have $c\in\mathbb{R}.$
	
	We prove the second statement in (c) by contradiction. Without loss of generality, we may assume that $\mathbb{C}_0[W_1]\subseteq \mathbb{C}_0[W_2].$ Then we have $M_{W_1}\subseteq M_{W_2}.$ As both are linear space of the same dimension $d^2-1,$ they must be equal, i.e., $M_{W_1}=M_{W_2}.$ Similar to the above derivations, we deduce that $W_2$ and $W_1$ are $\mathbb{R}$ equivalent, which contradicts to the assumption.
	
	(a). If $x$ is positive real and $W\in \mathbb{W}_x^c$, we have $\mathbb{C}_x[W]= \mathbb{C}_\geq [W] \cup \mathbb{C}_\geq [x\mathbb{I}_d-W].$
	If $x\mathbb{I}_d-W$ is not positive semidefinite, then $\mathbb{C}_\geq [x\mathbb{I}_d-W]\neq\emptyset.$ In such setting, $\mathbb{C}_x[W]$ is a disjoint unions of two convex sets.
	
	Suppose that $\mathbb{C}_x[W_1]=\mathbb{C}_x[W_2]$. If  $\mathbb{C}_\geq [x\mathbb{I}_d-W_1]=\emptyset,$  we must have $\mathbb{C}_\geq [x\mathbb{I}_d-W_2]=\emptyset$, which yields $\mathbb{C}_\geq [W_1]=\mathbb{C}_\geq [W_2].$ By the case (b), $W_2= rW_1$ for some positive $r$. Taking into account the trace, we must have $r=1$. If $\mathbb{C}_\geq [x\mathbb{I}_d-W_1]\neq \emptyset,$ we have $\mathbb{C}_\geq [x\mathbb{I}_d-W_2]\neq \emptyset.$ Moreover, either $\mathbb{C}_\geq [W_1]=\mathbb{C}_\geq [W_2]$ or $\mathbb{C}_\geq [W_1]=\mathbb{C}_\geq [x\mathbb{I}_d-W_2].$ For the former case, we have shown that $W_1=W_2$. For the latter case, we also have $\mathbb{C}_\geq [W_2]=\mathbb{C}_\geq [x\mathbb{I}_d-W_1]$, form which we can deduce $(d-1)W_1=x\mathbb{I}_d-W_2$ and $(d-1)W_2=x\mathbb{I}_d-W_1.$
	If $d>2$, we have $W_1,W_2\propto \mathbb{I}_d$, which contradicts with $W_1,W_2\in \mathbb{W}_x^c.$ While for the case $d=2$, the two equations merge into $W_1+W_2=x\mathbb{I}_d.$
	
	Moreover, for $d=2$ one has $\tr[W_1\rho]+\tr[W_2\rho]=x$ for any $\rho\in\mathbb{D}.$ Hence, $\tr[W_1\rho]<0$ if and only if $\tr[W_2\rho]>x$, and $\tr[W_1\rho]>x$ if and only if $\tr[W_2\rho]<0.$ Therefore, $\mathbb{C}_x[{W_1}]=\mathbb{C}_x[{W_2}].$
\end{proof}

\end{document}